\newcommand{\lemlab}[1]{\label{lemma:#1}}
\newcommand{\thmlab}[1]{\label{thm:#1}}
\newcommand{\corlab}[1]{\label{cor:#1}}
\newcommand{\figlab}[1]{\label{fig:#1}}
\newcommand{\seclab}[1]{\label{sec:#1}}
\newcommand{\lemref}[1]{\ref{lemma:#1}}
\newcommand{\thmref}[1]{\ref{thm:#1}}
\newcommand{\corref}[1]{\ref{cor:#1}}
\newcommand{\secref}[1]{\ref{sec:#1}}
\newcommand{\figref}[1]{\ref{fig:#1}}
\def\P{{\mathcal P}}
\def\kap{{\kappa}}
\def\a{{\alpha}}
\def\b{{\beta}}
\title{Unfolding Prismatoids as Convex Patches:\\
Counterexamples and Positive Results}
\author{%
Joseph O'Rourke%
    \thanks{Department of Computer Science, Smith College, Northampton, MA
      01063, USA.
      {\tt orourke@cs.smith.edu}.
This paper was prepared for but never submitted to \emph{CCCG'12}.
It still retains that conference's formatting conventions.
      }
}
\begin{document}
\thispagestyle{empty}
\maketitle

\begin{abstract}
We address the unsolved problem of unfolding prismatoids
in a new context, viewing  a ``topless prismatoid''
as a \emph{convex patch}---a polyhedral subset of the surface of a convex
polyhedron
homeomorphic to a disk.
We show that several natural strategies for unfolding a prismatoid can
fail,
but obtain a positive result for ``petal unfolding" topless prismatoids.
We also show that the natural extension 
to a convex patch consisting of a
face of a polyhedron
and all its incident faces, does not always have a nonoverlapping petal unfolding.
However, we
obtain a positive result by excluding the problematical patches.
This then leads a positive result for restricted prismatoids.
Finally, we suggest suggest studying the unfolding of
convex patches in general, and offer some possible lines of investigation.
\end{abstract}

\section{Introduction}
\seclab{Introduction}
A \emph{prismatoid} is the convex hull of two convex polygons $A$
(above) and $B$ (base) in
parallel planes.  Despite its simple structure, it remains unknown
whether or not every prismatoid has a nonoverlapping edge unfolding, a narrow
special case of what has become known as D\"urer's Problem: whether
every convex polyhedron has a nonoverlapping edge unfolding~\cite[Prob.~21,1,
p.~300]{do-gfalop-07}.
(All polyhedra considered here are convex polyhedra, and we will
henceforth drop the modifier ``convex," and consistently use the symbol $\P$; 
we will also use \emph{unfolding} to mean ``nonoverlapping edge unfolding.")
Motivated by the apparent difficulty of placing the top in an unfolding, we explore
unfolding \emph{topless prismatoids}, those with the top $A$ removed.
We show that several natural approaches fail, but that a somewhat complex algorithm
does succeed in unfolding any topless prismatoid.

This success suggests studying the unfolding of a \emph{convex patch} more generally:
a connected subset of faces of a polyhedron $\P$, homeomorphic to a disk.
A natural convex patch is an extension of a class studied by Pincu~\cite{p-ofnp-07}.
He proved that the patch that consists of one face $B$ of $\P$ and every face 
that shares an edge with $B$, has a ``petal unfolding" (defined below).
This \emph{edge-neighborhood} of a face is itself a natural generalization of ``domes,"
earlier proven to have a petal unfolding~\cite[p.~323ff]{do-gfalop-07}.
The generaliziation we explore is the \emph{vertex-neighborhood} of a face: $B$ together
with every face that shares at least a vertex with $B$.  We show that not every vertex-neighborhood
patch has a petal unfolding.  Note that every topless prismatoid is a vertex-neighborhood of its base.
This negative result suggests a restriction that permits unfolding: if $\P$ is nonobtusely triangulated,
then the vertex-neighborhood of any face does have a petal unfolding.  This in turn leads to a proof
that triangular prismatoids (top included), composed of nonobtuse triangles, have an unfolding.

Finally, we make a few observations and conjectures about unfolding arbitrary convex patches.

\subsection{Band and Petal Unfoldings}
\seclab{BandPetal}
There are two natural unfoldings of a prismatoid.
A \emph{band unfolding} cuts one lateral edge and unfolds all
lateral faces as a unit, called a \emph{band}, leaving $A$ and $B$
attached each by one uncut edge to opposite sides of the band
(see, e.g.,~\cite{adlmost-eunpb-07}).
Aloupis showed that the lateral edge can be chosen so that band alone unfolds~\cite{a-rps-05},
but I showed that, nevertheless, there are prismatoids
such that every band unfolding overlaps~\cite{o-bupc-07}.
The example will be repeated here, as it plays a role in Sec.~\secref{ConvexPatches}.

The prismatoid with no band unfolding is shown in Fig.~\figref{BandedHex3D}.
\begin{figure}[htbp]
\centering
\includegraphics[width=0.8\linewidth]{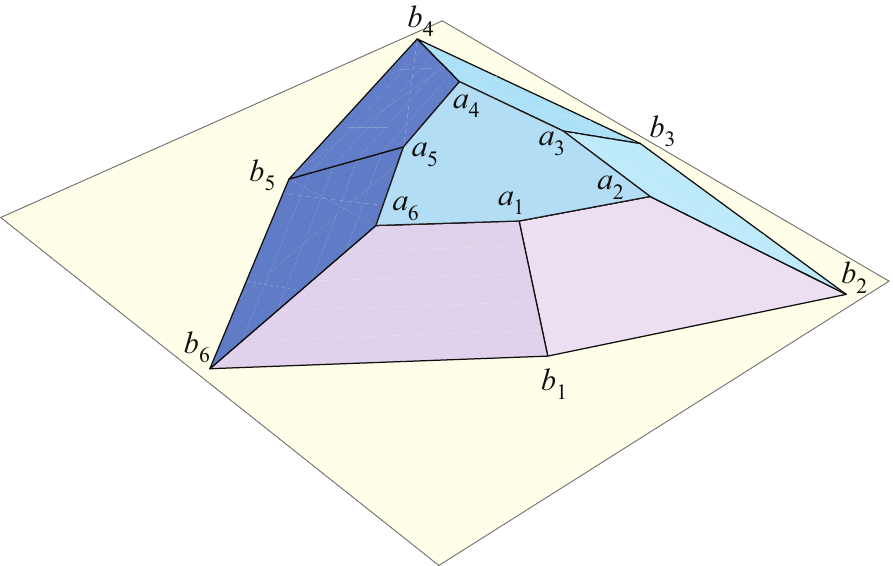}
\caption{The banded hexagon. The curvatures at the three
side vertices $\{a_2,a_4,a_6\}$ is $2^\circ$,
and that at the apex vertices $\{a_1,a_3,a_5\}$ is $7.5^\circ$.}
\figlab{BandedHex3D}
\end{figure}
The possible band unfoldings are shown in the Appendix, Figs.~\figref{ApexCuts} and~\figref{SideCuts}.
Note that this example also establishes that not every topless prismatoid has a band unfolding,
simply by interchanging the roles of $A$ and $B$.

The second natural unfolding is a \emph{petal unfolding}, called a
``volcano unfolding'' 
in~\cite[p.~321]{do-gfalop-07}.
Because Fig.~\figref{BandedHex3D} without its base is a edge-neighborhood patch, it
can be petal-unfolded by Pincu's result~\cite{p-ofnp-07} as noted above: simply
cut each lateral edge $a_i b_i$.

Let $\P$ be a prismatoid, and assume all lateral faces are triangles,
the generic and seemingly most difficult case.
Let $A=(a_1,a_2,\ldots)$
and $B=(b_1,b_2,\ldots)$.
Call a lateral face that shares an edge with $B$ a \emph{base} or
$B$-triangle,
and a lateral face that shares an edge with $A$ a \emph{top} or
$A$-triangle.
A petal unfolding cuts no edge of $B$, and unfolds every base triangle
by rotating it around its $B$-edge into the base plane.
The collection of $A$-triangles incident to the same $b_i$ vertex---the
\emph{$A$-fan} $AF_i$---must
be partitioned into two groups, one of which rotates clockwise (cw) to
join with the unfolded base triangle to its left, and the other group
rotating counterclockwise (ccw) to
join with the unfolded base triangle to its right. Either group could
be empty.  Finally, the top $A$ is attached to one $A$-triangle.
So a petal unfolding has choices for how to arrange the $A$-triangles,
and which $A$-triangle connects to the top.
See Fig.~\figref{TriPrismatoidExample} in the Appendix for an example.

As of this writing, it remains possible that every prismatoid has a petal unfolding:
I have not been able to find a counterexample.
For a hint of why placing the top in a petal unfolding seems problematical, see Fig.~\figref{DrumOverlap} in the Appendix.
The next section presents the main result: every topless prismatoid has a petal unfolding.

\section{Topless Prismatoid Petal Unfolding}
\seclab{ToplessPetalUnfolding}
An example of a petal unfolding of a topless prismatoid is shown in
Fig.~\figref{ExampleUnf58}.
\begin{figure}[htbp]
\centering
\includegraphics[width=0.75\linewidth]{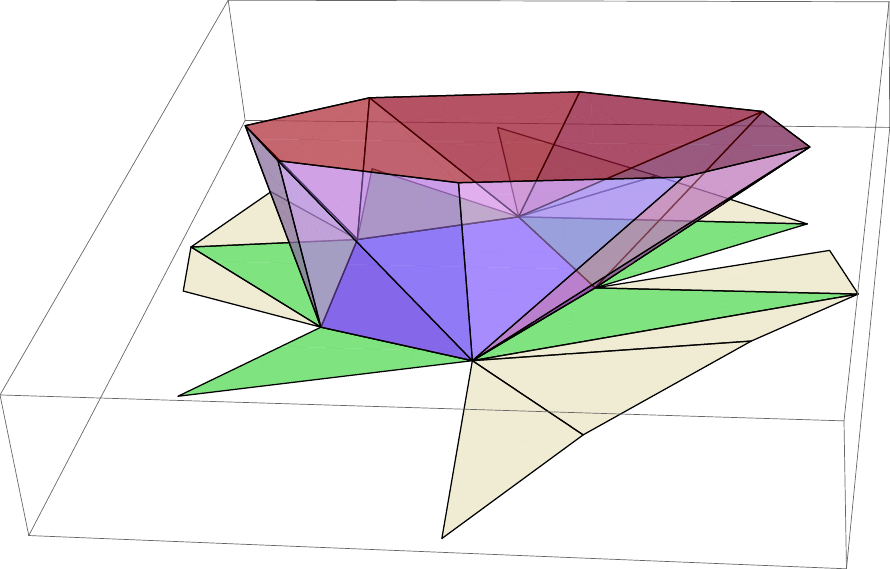}
\caption{Unfolding of a topless prismatoid}
\figlab{ExampleUnf58}
\end{figure}
Even topless prismatoids present challenges.
For example, consider the special case when there is only one $A$-triangle between every two $B$-triangles.
Then the only 
choice for placement of the $A$-triangles is whether to turn each ccw or cw.
It is natural to hope that rotating all $A$-triangles consistently ccw or cw suffices to avoid
overlap, but this can fail, as in Fig.~\figref{DrumOverlap}, and even for triangular prismatoids,
Fig.~\figref{WingsCrossCcw} in the Appendix.
A more nuanced approach would turn each $A$-triangle
so that its (at most one) obtuse angle is not joined to a $B$-triangle (resolving Fig.~\figref{WingsCrossCcw}), but this can fail also,
a claim I will not substantiate.

The proof follows this outline:
\begin{enumerate}
\item An ``altitudes partition'' of the plane exterior to the \emph{base unfolding} ($B$ plus all $B$-triangles) is defined
and proved to be a paritition.
\item It is shown that both $\P$ and
this partition vary in a natural manner with respect to the separation $z$ between the $A$- and $B$-planes.
\item An algorithm is detailed for petal unfolding the $A$-triangles for the ``flat prismatoid'' $\P(0)$, the
limit of $\P(z)$ as $z \to 0$, such that these $A$-triangles fit inside the regions of the altitude partition.
\item It is proved that nesting within the partition regions remains true for all $z$.
\end{enumerate}

\subsection{Altitude Partition}
\seclab{AltitudePartition}
We use $a_i$ and $b_j$ to represent the vertices of $\P$, and primes to indicate unfolded images on
the base plane.

Let $B_i = \triangle b_i b_{i+1} a'_j$ be the $i$-th base triangle.
Say that $BU=B \cup (\bigcup_i B_i)$ is the \emph{base unfolding}, the unfolding
of all the $B$-triangles arrayed around $B$ in the plane, without any $A$-triangles.
The altitude partition partitions the plane exterior to the base unfolding.

Let $r_i$ be the \emph{altitude ray} from $a'_j$ along the altitude of $B_i$.
Finally, define $R_i$ to be the region of the plane incident to $b_i$,
including the edges of the $B_{i-1}$ and $B_i$ triangles incident to $b_i$,
and bounded by $r_{i-1}$ and $r_i$.
See Fig.~\figref{AltitudeRays1}.
\begin{figure}[htbp]
\centering
\includegraphics[width=0.75\linewidth]{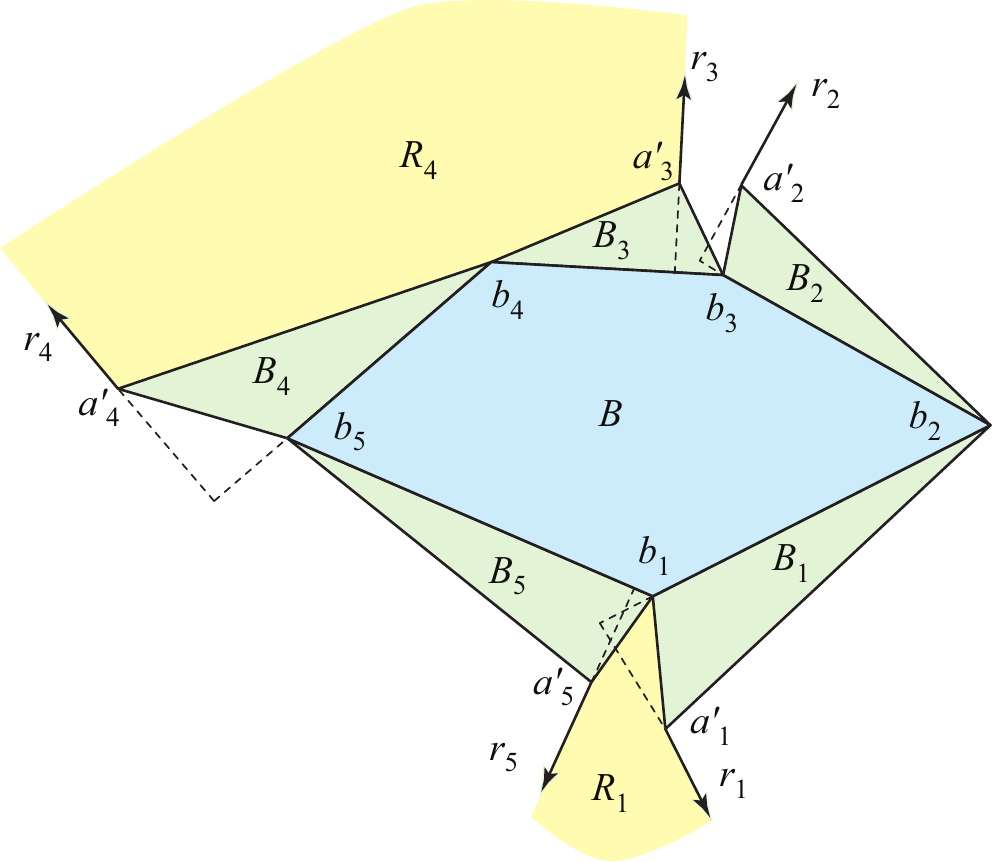}
\caption{Partition exterior to the base unfolding by altitude rays $r_i$. 
In this example both $A$ and $B$ are pentagons;
in general there would not be synchronization between the $b_i$ and $a_i$ indices.
The $A$-triangles are not shown.}
\figlab{AltitudeRays1}
\end{figure}

\begin{lemma}
No pair of altitude rays cross in the base plane, and so they define a partition of that plane exterior
to the base unfolding $BU$.
\lemlab{AltitudeRays}
\end{lemma}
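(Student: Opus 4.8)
The plan is to read off, from the three-dimensional shape of $\P$, exactly which line each altitude ray lies on and where along it the ray begins, and then to reduce the non-crossing claim to a one-line inequality between extreme points of the convex polygon $\overline{A}$, the vertical projection of the top $A$ onto the base plane.

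First I would place $B$ in a horizontal plane and $A$ in the parallel plane at height $z>0$, and write $\overline{a}_k$ for the vertical projection of a top vertex $a_k$ onto the base plane. Fix a base edge $e_i=b_ib_{i+1}$, let $\ell_i$ be the line carrying it, and let $\nu_i$ be the outward unit normal of the convex polygon $B$ along $e_i$. Two facts drive everything. (i) Since $B_i=\triangle b_ib_{i+1}a_j$ is a face of $\P$, the supporting plane of $\P$ carrying $B_i$ is the one obtained by rotating the base plane about $\ell_i$ (lifting the outward side) until it first touches $A$; as the trace of that rotating plane in the top plane is a line parallel to $e_i$ sweeping inward from infinity, ``first touch'' means $\overline{a}_j$ is precisely the vertex of $\overline{A}$ that is extreme in the direction $\nu_i$. (ii) If $f_i$ is the foot of the perpendicular from $\overline{a}_j$ to $\ell_i$ and $h_i$ is the altitude length of $B_i$ from $a_j$, then $h_i=\sqrt{z^{2}+|\overline{a}_j-f_i|^{2}}$, and unfolding $B_i$ outward about $e_i$ carries $a_j$ to $a_j'=f_i+h_i\nu_i$. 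Putting these together, $r_i$ is the ray from $a_j'$ in direction $\nu_i$; it lies on the line through $\overline{a}_j$ with direction $\nu_i$, and every point of $r_i$ has the form $\overline{a}_j+s\nu_i$ with $s\ge h_i-\delta_i>0$, where $\delta_i$ is the signed outward distance of $\overline{a}_j$ from $\ell_i$ (strictness here is exactly where $z>0$ enters, via $h_i>|\delta_i|$).

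Now suppose, for contradiction, that $r_i$ and $r_k$ meet at a point $p$ with $i\ne k$; write $v_i=\overline{a}_j$ and let $v_k\in\overline{A}$ be the analogous extreme vertex for $e_k$. By the previous paragraph $p=v_i+s\nu_i=v_k+t\nu_k$ with $s,t>0$. Extremality of $v_i$ in direction $\nu_i$ gives $\langle v_i,\nu_i\rangle\ge\langle v_k,\nu_i\rangle$, and substituting $v_i=p-s\nu_i$ and $v_k=p-t\nu_k$ this collapses to $s\le t\rho$, where $\rho:=\langle\nu_i,\nu_k\rangle$; symmetrically $t\le s\rho$. Multiplying these and cancelling $st>0$ forces $\rho^{2}\ge1$, hence $\nu_i=\pm\nu_k$; but distinct edges of a convex polygon have distinct outward normals, and $\rho=-1$ is already excluded since it would give $s\le t\rho<0$. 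This contradiction shows no two altitude rays cross.

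For the partition statement, note that each $r_i$ starts at the unfolded apex $a_j'$ of $B_i$ --- a point of $\partial(BU)$ --- and, since every point of $r_i$ lies at distance $\ge h_i>0$ on the outer side of the supporting line $\ell_i$, never returns to $B$; so the pairwise disjoint rays $r_1,\dots,r_n$ appear around $BU$ in the same cyclic order as their base edges and cut the exterior of $BU$ into the $n$ wedge regions $R_i$. I expect the real obstacle to be fact (i) --- pinning down $\overline{a}_j$ as the $\nu_i$-extreme vertex of $\overline{A}$, which is the one place where the prismatoid hypothesis (a convex top in a parallel plane, triangular lateral faces) is genuinely used; with that and the position formula for $a_j'$ in hand, the crossing argument is pure bookkeeping. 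The one point needing care is that the unfolded base triangles may themselves overlap; this is harmless for the non-crossing of the rays, but the clean ``partition'' wording is best read, or verified, modulo that.
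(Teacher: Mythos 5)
Your proof is correct, but it is a genuinely different argument from the one in the paper. The paper works locally: for two \emph{adjacent} rays it splits into cases according to whether the base angles $\b_1,\b_2$ at the shared vertex are acute or obtuse, disposes of the two ``same type'' cases by noting the ray lines meet behind the rays, and kills the remaining mixed case with a three-dimensional argument (the vertical plane $H_i$ through the altitude of $B_i$ contains both $a_i$ and $a_i'$, and a crossing would force $a_1$ to the wrong side of $H_2$, contradicting that the plane of $B_1$ supports $A$); nonadjacent pairs are then reduced to the adjacent case by extending the two base edges until they meet at an auxiliary point $\overline{b}$ and checking the altitudes are unchanged. You instead identify the apex of $B_i$ as the vertex of the projected top $\overline{A}$ extreme in the outward normal direction $\nu_i$, observe that $r_i$ lies on $\{\overline{a}_j + s\nu_i : s>0\}$, and derive the contradiction $\langle\nu_i,\nu_k\rangle^2\ge 1$ from the two support inequalities --- a single computation that treats adjacent and nonadjacent pairs uniformly, dispenses with the edge-extension construction, and makes explicit where $z>0$ enters. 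The trade-off is that your route leans on the extremal characterization of the apex (which is where the parallel-planes hypothesis and the ``first touch'' rotation argument must be spelled out carefully, including the genericity assumption that the touch is a single vertex so the lateral face is a triangle), whereas the paper's argument is more elementary and visual but needs the separate reduction for nonadjacent rays. Your closing caveats --- that the cyclic ordering of the $\nu_i$ gives the wedge decomposition, and that possible overlap among the unfolded base triangles is a separate issue not addressed by either proof --- are well placed.
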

\begin{proof}
See Sec.~\secref{ProofAltitudeRays} in the Appendix.
\end{proof}

Our goal is to show that the $A$-fan $AF_i$ incident to $b_i$ can be partitioned into two groups,
one rotated cw, one ccw, so that both fit inside $R_i$.
(Note that this nesting is violated in Fig.~\figref{WingsCrossCcw} in the Appendix.)

\subsection{Behavior of $\P(z)$}
\seclab{BehaviorPz}

We will use ``$(z)$" to indicate that a quantity varies with respect to the height $z$ separating
the $A$- and $B$-planes.

\begin{lemma}
Let $\P(z)$ be a prismatoid with height $z$.
Then the combinatorial structure of $\P(z)$ is independent of $z$,
i.e., raising or lowering $A$ above $B$ retains the convex hull structure.
\lemlab{zto0}
\end{lemma}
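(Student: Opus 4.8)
The plan is to show that the face lattice of $\P(z)$ does not change as $z$ ranges over $(0,\infty)$. Fix the base polygon $B=(b_1,b_2,\ldots)$ in the plane $\{z=0\}$ and the top polygon $A=(a_1,a_2,\ldots)$ in the plane $\{z=h\}$; write $A(z)$ for the copy of $A$ scaled (in the $z$-coordinate only) to sit in the plane $\{z'=z\}$, so that $\P(z)=\mathrm{conv}(B\cup A(z))$. Since $B$ and $A(z)$ lie in distinct parallel planes, every face of $\P(z)$ is one of: the base $B$ itself, the top $A(z)$ itself, or a \emph{lateral} face whose vertex set meets both $B$ and $A$. Hence it suffices to understand how the lateral faces vary with $z$.

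First I would record the standard fact that a subset $F\subseteq \{b_i\}\cup\{a_j(z)\}$ is (the vertex set of) a face of $\P(z)$ iff there is a supporting hyperplane $H$ with $F=H\cap(\{b_i\}\cup\{a_j(z)\})$ and everything else strictly on one side. Parametrize a candidate hyperplane as $H=\{(x,y,z'):\langle n,(x,y)\rangle+c\,z'=d\}$ with the two points $\{z'=0\}$-side and $\{z'=z\}$-side conditions. The key observation is the substitution $c\mapsto cz$: a hyperplane supports $\P(z)$ along a lateral face with both $b_i$'s and $a_j$'s on it exactly when the ``flattened'' hyperplane $\{\langle n,(x,y)\rangle + c' = d'\}$ in the single plane simultaneously supports $\mathrm{conv}(\{b_i\})=B$ along the corresponding $b$-subset and $\mathrm{conv}(\{a_j\})=A$ along the corresponding $a$-subset, with the linear functional $\langle n,\cdot\rangle$ separating the two polygons appropriately; the value of $z>0$ only rescales $c$ and never affects which inequalities are strict. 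Thus the combinatorial type of the lateral part of $\P(z)$ is determined by data intrinsic to $A$ and $B$ as planar polygons, independent of $z$. Concretely, the lateral faces are exactly the faces of the ``linear programming overlay'' one gets by sweeping a common outer normal direction $n$ around the circle and, for each $n$, taking the $n$-maximal face of $B$ together with the $n$-maximal face of $A$ — this is the standard description of the hull of two polygons in parallel planes, and it has no $z$ in it.

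To turn this into a clean proof I would (i) show the map $n\mapsto(\text{face of }B\text{ maximal in direction }n,\ \text{face of }A\text{ maximal in direction }n)$ is piecewise constant with finitely many breakpoints, given by the union of the outer-normal fans of $A$ and $B$; (ii) show that for every $z>0$ each lateral face of $\P(z)$ corresponds to exactly one such pair (an edge of $B$ with a vertex of $A$, or a vertex of $B$ with an edge of $A$, or — nongenerically — an edge with a parallel edge), with the correspondence and the incidences among these faces independent of $z$; and (iii) note that $B$ and $A(z)$ are themselves faces for every $z>0$. Combining (i)–(iii) gives a bijection between the faces of $\P(z)$ and those of $\P(z')$ preserving inclusion, i.e. identical combinatorial structure.

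The main obstacle, and the only place real care is needed, is the degenerate/parallel situations: coplanar lateral faces (when an edge of $A$ is parallel to an edge of $B$, giving a trapezoidal lateral face) and the possibility that three or more of the $b_i,a_j$ become coplanar for special $z$. One must check that whether such a coincidence occurs is governed by the \emph{directional} relationship between $A$ and $B$ (which edge normals coincide) rather than by the gap $z$ — a linear functional on the $(x,y)$-plane attains its max on the same sub-polygon of $B$ and the same sub-polygon of $A$ regardless of how far apart the planes are — so these degeneracies, if present, are present for all $z$ simultaneously and in the same combinatorial pattern. Once that is dispatched, the independence of the combinatorial structure from $z$ follows, and in particular one may pass to the limit $z\to 0$ to define the ``flat prismatoid'' $\P(0)$ used in the next subsection.
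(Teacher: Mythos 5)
Your proof is correct, and it rests on the same underlying mechanism as the paper's: the supporting-hyperplane condition for a lateral face decouples into planar supporting-line conditions on $A$ and $B$ separately, and these conditions do not involve the separation $z$. The difference is one of packaging. The paper argues face-by-face: it fixes a $B$-triangle $\triangle b_1 b_2 a(z)$, observes that its plane meets the $A$-plane in a line through $a$ parallel to $b_1 b_2$, and notes that this line remains a line of support of $A$ because varying $z$ merely translates $A$ vertically; hence that triangle persists as a face of $\P(z)$ for all $z>0$ (with $A$-triangles handled by inverting $\P$). Your version is global: you sweep the outer normal $n$ around the circle, identify the lateral faces with the cells of the common refinement of the normal fans of $A$ and $B$ as planar polygons, and observe that $z$ only rescales the vertical component $c$ of the hyperplane normal without changing which inequalities are tight. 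Your route buys an explicit inclusion-preserving bijection of the entire face lattice in one stroke, plus an honest treatment of the degenerate parallel-edge (trapezoid) cases; the paper's route is shorter but strictly speaking only shows that each old face persists, leaving implicit the (easy) observation that the persisting faces already tile the whole boundary so no new faces can appear. Both arguments are sound.
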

\begin{proof}
See Sec.~\secref{ProofAltitudeRays} in the Appendix.
\end{proof}

We will call $\P(0) = \lim_{z \to 0} \P(z)$ a \emph{flat prismatoid}.
Each lateral face of $\P(0)$ is either an \emph{up-face} or a \emph{down-face},
and the faces of $\P(z)$ retain this classification in that their outward normals
either have a positive or a negative vertical component.

\begin{lemma}
Let $\P(z)$ be a prismatoid with height $z$, and $BU(z)$ its base unfolding.
Then the apex $a'_j(z)$ of each $B'_i(z)$ triangle $\triangle b_i b_{i+1} a'_j(z)$
in $BU(z)$ lies on the
fixed line containing the altitude of $B'_i(z)$.
\lemlab{AltitudeRaysUnf}
\end{lemma}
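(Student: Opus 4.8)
Let me think about this carefully. We have a flat prismatoid $\P(0)$ whose base triangles all lie in the base plane. As we raise the top plane to height $z$, each base triangle $B_i(z) = \triangle b_i b_{i+1} a_j$ in 3D has a fixed base edge $b_i b_{i+1}$ (in the base plane), but its apex $a_j$ moves. The unfolded image $B_i'(z) = \triangle b_i b_{i+1} a_j'(z)$ is obtained by rotating $B_i(z)$ about the line through $b_i b_{i+1}$ into the base plane. The claim is that $a_j'(z)$ stays on the line through the altitude of $B_i'(0)$ — i.e., the altitude foot on $b_i b_{i+1}$ doesn't move.

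The key observation: the foot of the altitude from $a_j$ to the line $\ell_i$ through $b_i b_{i+1}$ is determined entirely by the projection of $a_j$ onto $\ell_i$, and this is the same computation whether we do it in 3D (in the plane of the triangle $B_i(z)$) or after unfolding. So I would set up coordinates with $\ell_i$ as an axis and show the foot of the perpendicular from $a_j(z)$ to $\ell_i$ has a coordinate independent of $z$. What is $a_j(z)$? The vertex $a_j$ sits directly above its position $a_j(0)$ in the base plane, at height $z$ — i.e., $a_j(z) = a_j(0) + z\,\hat e_z$. The component of $a_j(z)$ along the horizontal direction of $\ell_i$ is therefore independent of $z$ (the vertical lift $z\hat e_z$ is orthogonal to the base plane, hence to $\ell_i$). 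Since the altitude foot's position along $\ell_i$ equals that horizontal component, the foot is fixed, and $a_j'(z)$ — which lies on the ray perpendicular to $\ell_i$ at that foot, on the outer side — stays on the fixed line containing the altitude of $B_i'(0)$.

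So the plan is: (1) fix $i$ and set up coordinates in the base plane with the line $\ell_i$ through $b_i b_{i+1}$ as the $x$-axis; (2) write $a_j(z) = (p, q, z)$ where $(p,q)$ are the horizontal coordinates of $a_j$, fixed as $z$ varies by Lemma~\ref{lemma:zto0} (the combinatorics, hence which $a_j$ pairs with $B_i$, is stable) and because raising $A$ is a pure vertical translation; (3) observe that the foot of the perpendicular from $a_j(z)$ to $\ell_i$ is the point $(p,0,0)$, independent of $z$; (4) note that unfolding $B_i(z)$ about $\ell_i$ is a rotation fixing $\ell_i$ pointwise, so $a_j'(z)$ lands on the ray from $(p,0)$ perpendicular to $\ell_i$ into the exterior half-plane, at distance equal to the 3D altitude length $\sqrt{q^2 + z^2}$; (5) conclude $a_j'(z)$ traces exactly the altitude line of $B_i'(0)$, moving monotonically outward as $z$ increases.

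The only mild subtlety — not really an obstacle — is making sure the "horizontal coordinates of $a_j$ are fixed" claim is airtight: one must invoke Lemma~\ref{lemma:zto0} so that the vertex labeled $a_j$ is genuinely the same combinatorial vertex for all $z$, and then use that the standard realization of $\P(z)$ places $A$ as a rigid copy of a fixed polygon translated vertically by $z$, so each $a_j$'s $(x,y)$-coordinates are constant. Everything else is a one-line coordinate computation, so there is essentially no hard step here; the lemma is a bookkeeping fact that sets up the scaling argument used later in step~4 of the outline.
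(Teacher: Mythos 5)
Your proposal is correct and is essentially the paper's own argument: the unfolding rotation about $b_i b_{i+1}$ carries the perpendicular (altitude) line in the plane of $B_i(z)$ to the fixed perpendicular line in the base plane, and since raising $A$ is a vertical translation the altitude foot on $b_i b_{i+1}$ does not move. Your coordinate version merely makes explicit the fixed-foot observation that the paper leaves implicit, so there is nothing further to add.
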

\begin{proof}
See Sec.~\secref{AltitudeRaysUnf} in the Appendix.
\end{proof}

Thus the vertices $a'_j(z)$ of the base unfolding
``ride out" along the altitude rays $r_i$ as $z$ increases (see ahead to Fig.~\figref{FlippingCR}
for an illustration).
Therefore the combinatorial structure of the altitude partition is fixed,
and $R_i$ only changes geometrically by the lengthening of the edges
$b_i a'_j$ and $b_{i+1} a'_j$ and the change in the angle gap $\kap_{b_i}(z)$ at $b_i$.

\subsection{Structure of $A$-fans}
\seclab{A-fan}
Henceforth we concentrate on one $A$-fan, which we always take to be
incident to $b_2$, and so between $B_1=\triangle b_1 b_2 a_1$
and $B_2=\triangle b_2 b_3 a_k$.
The \emph{$a$-chain} is the chain of vertices $a_1, \ldots, a_k$.
Note that the plane containing $B_1$ supports $A$ at $a_1$, and
the plane containing $B_2$ supports $A$ at $a_k$.
Let $\b=\b_2$ be the base angle at $b_2$: $\b=\angle b_1 b_2 b_3$.
We state here a few facts true of every $A$-fan.

\begin{enumerate}
\item An $a$-chain spans at most ``half" of $A$, i.e., a portion between parallel
supporting lines (because $\b > 0$).
\item If an $A$-fan is unfolded as a unit to the base plane, the $a$-chain
consists of a convex portion followed by a reflex 
followed by a convex portion,
where any of these portions may be empty.
In other words, excluding the first and last vertices, the interior vertices of the chain
have convex angles, then reflex, then convex.
\item Correspondingly, an $A$-fan consists of down-faces followed by up-faces
followed by down-faces, where again any (or all) of these three portions could be empty.
\item All four possible combinations of up/down are possible for the $B_1$ and $B_2$ triangles.
\end{enumerate}
The second fact above is not so easy to see;
its proof is hinted at in Sec.~\secref{CRz} in the Appendix.
The intuition is that there is a limited amount of variation possible in
an $a$-chain. 
It is the third fact that we will use essentially; it will become clear shortly.

\subsection{Flat Prismatoid Case Analysis}
\seclab{FlatCaseAnalysis}
How the $A$-fan is proved to sit inside its altitude region $R$ for $\P(0)$
depends primarly on where $b_2$ sits with respect to $A$, and secondarily
on the three $B$-vertices $(b_1,b_2,b_3)$.
Fig.~\figref{FlatCase1b} illustrates one of the easiest cases, when $b_2$ is
in $C$, the convex region bounded by the $a$-chain and extensions of its
extreme edges.  Then all the $A$-faces are down-faces, the $a$-chain is convex,
one of the two $B$-faces is a down-face ($B_2$ in the illustration), and we
simply leave the $A$-fan attached to that $B$ down-face.
\begin{figure}[htbp]
\centering
\includegraphics[width=0.5\linewidth]{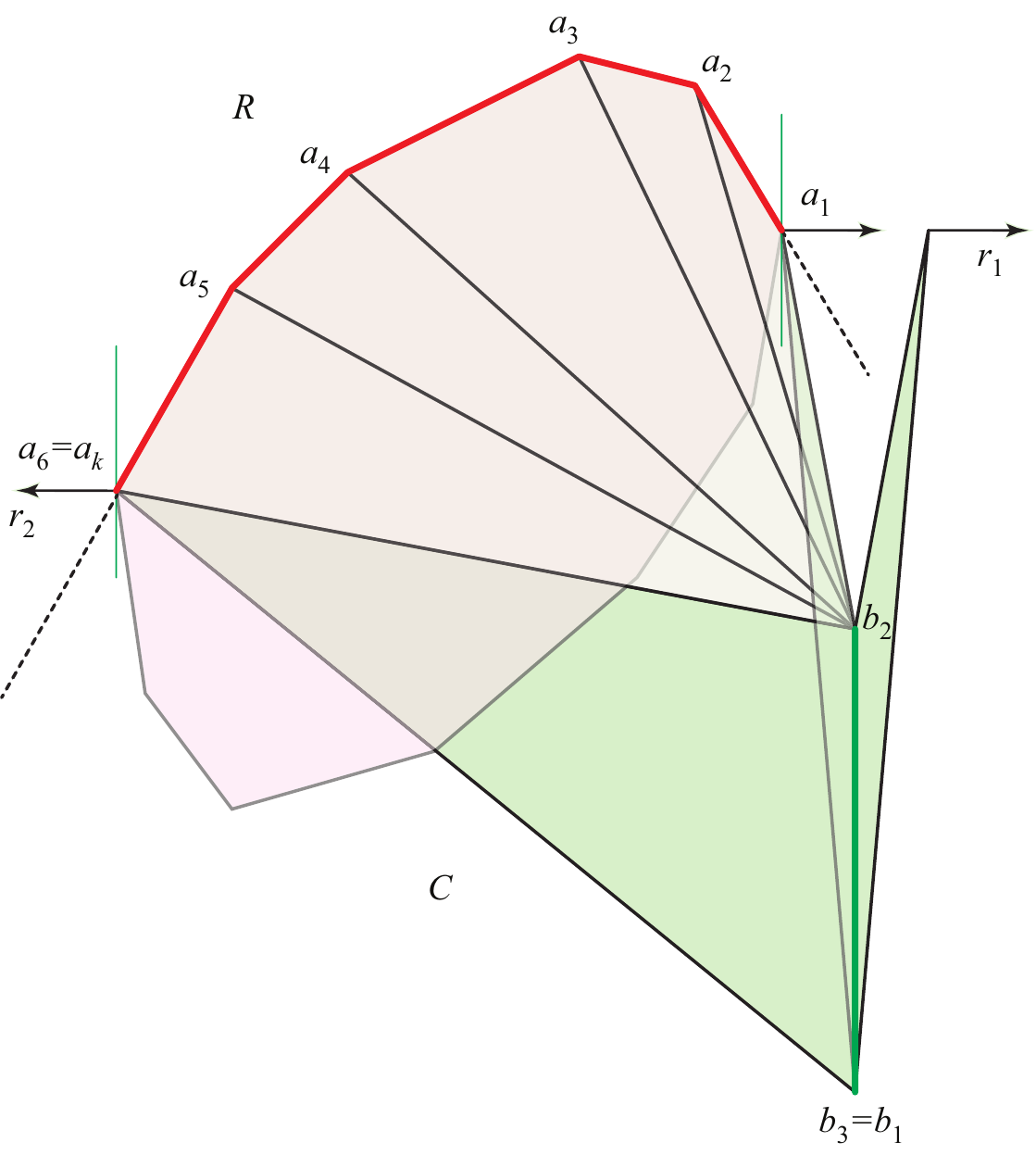}
\caption{Case~1b.  Here we have illustrated  $b_1=b_3$ to allow for
the maximum $a$-chain extent.}
\figlab{FlatCase1b}
\end{figure}

A second case occurs when $b_2$ is on the reflex side of $A$.
An instance when both $B$-triangles are down-faces is illustrated in
Fig.~\figref{FlatFlippingCase2}.  Now the $A$-fan consists of down-faces
and up-faces, the up-faces incident to the reflex side of the $a$-chain.
These up-faces must be flipped in the unfolding, reflected across one of the two tangents
from $b_2$ to $A$. A key point is that not always will
both flips be ``safe'' in the sense that they stay inside the altitude region.
An unsafe flip is illustrated in Fig.~\figref{FlatFlippingCase2Bad}
in the Appendix.
\begin{figure}[htbp]
\centering
\includegraphics[width=\linewidth]{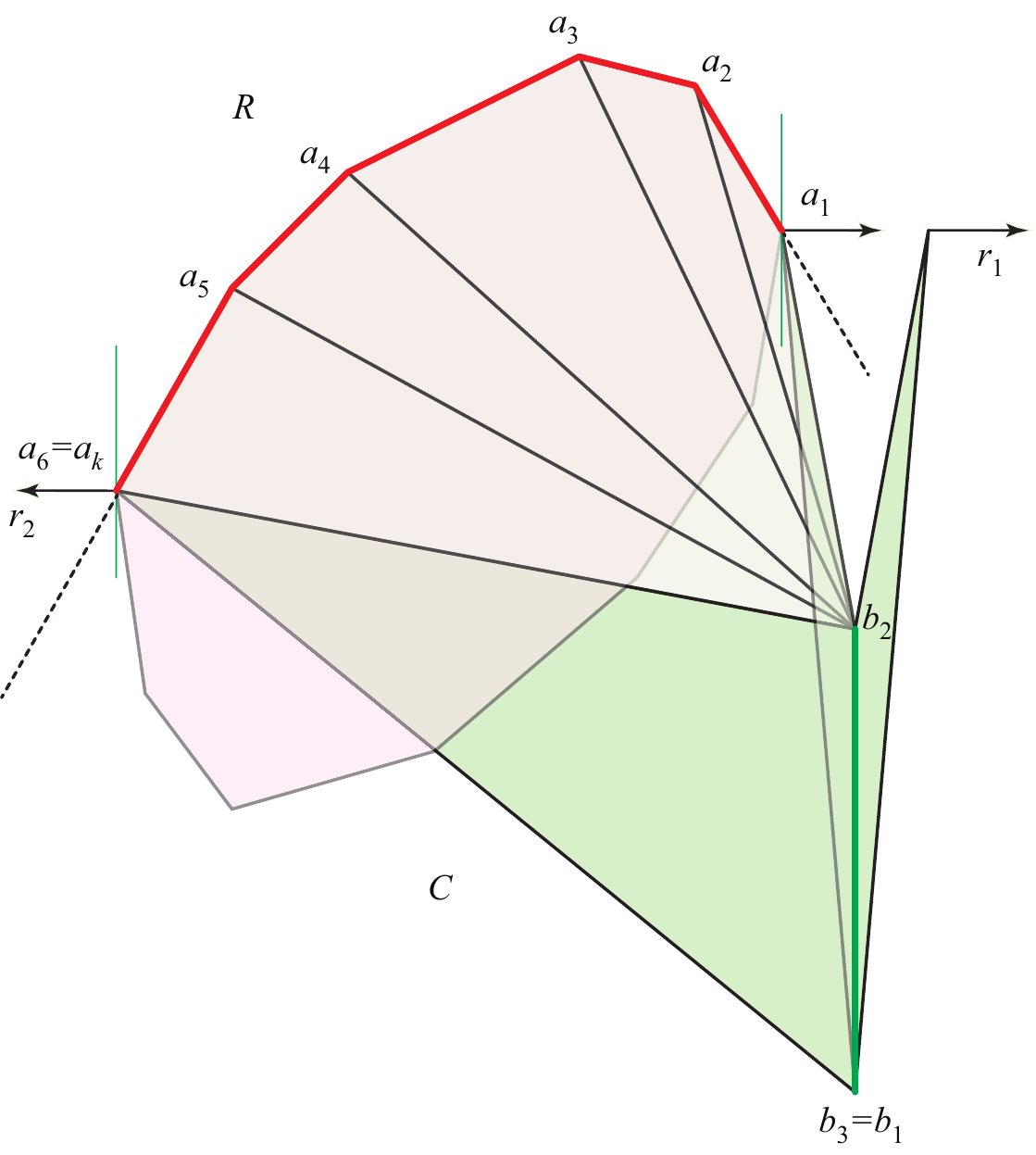}
\caption{Case~2a. The $A$-triangles between the tangents $b_2$ to $a_3$
and $b_2$ to $a_6$ are up-faces. (a)~shows the up-faces flipped over
the left tangent $b_2 a_6$, and (b)~when flipped over the right tangent
$b_2 a_3$.}
\figlab{FlatFlippingCase2}
\end{figure}
Fortunately, one of the two flips is always safe:
\begin{lemma}
Let $b_2$ have tangents touching $a_s$ and $a_t$ of $A$.
Then either reflecting the enclosed up-faces across the left tangent,
or across the right tangent, is ``safe" in the sense that no points of
a flipped triangle falls outside
the rays $r_1$ or $r_k$.
\lemlab{Flip}
\end{lemma}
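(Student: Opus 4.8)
The plan is to analyze a single flip and show that "unsafe" behavior on one tangent forces "safe" behavior on the other. Set up coordinates with $b_2$ at the origin. The two tangent lines from $b_2$ touch $A$ at $a_s$ (the right tangent, toward $B_1$) and $a_t$ (the left tangent, toward $B_2$); the up-faces of the $A$-fan are exactly the triangles $\triangle b_2 a_i a_{i+1}$ with $s \le i < t$, i.e.\ those spanning the reflex portion of the $a$-chain. Flipping these up-faces over the right tangent means reflecting the polygonal piece $P = \bigcup_{s \le i < t}\triangle b_2 a_i a_{i+1}$ across the line $b_2 a_s$; flipping over the left tangent reflects $P$ across $b_2 a_t$. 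In each case the rest of the $A$-fan (the two down-face portions) stays put, and the question is whether the reflected copy of $P$ pokes past the bounding altitude rays $r_1$ (emanating from $a_1'$ along the altitude of $B_1$) or $r_k$ (from $a_k'$ along the altitude of $B_2$).

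First I would record the key geometric facts that make the flip well-behaved. Because the tangent line $b_2 a_s$ supports $A$, the reflected piece $P$ lies entirely on the far side of that line from $A$, in the wedge between $b_2 a_s$ and its reflection; moreover $a_s$ is a fixed point of the reflection, so the reflected image of the $a$-chain still starts at $a_s$ and fans outward from $b_2$. The altitude ray $r_1$ starts at $a_1'$ and is perpendicular to $b_1 b_2$; by fact~2 of Section~\secref{A-fan}, $a_1$ is a convex vertex of the $a$-chain, which controls how far the chain bulges toward $r_1$. I would make precise, via the angle relations among $\b = \angle b_1 b_2 b_3$, the tangent directions $b_2 a_s$ and $b_2 a_t$, and the altitude directions, the statement that the angular "room" available between the right-flipped image of $P$ and the ray $r_1$ equals the total base-angle gap $\b$ minus the angular width of $P$ as seen from $b_2$, plus a contribution from how far $a_1'$ has ridden out along $r_1$; symmetrically on the left with $r_k$. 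The reflections swap which tangent the up-face packet is pressed against, so these two "room" quantities are complementary: one flip uses up angular budget on the $r_1$ side, the other on the $r_k$ side, and they cannot both exceed the available budget because the up-face packet's angular width at $b_2$ is a single fixed quantity bounded by the span between the two tangents, which in turn is bounded using $\b > 0$ (fact~1).

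Concretely, the argument I would carry out is: (i) show the right flip is safe at $r_k$ automatically (the flipped packet rotates away from the $B_2$ side), and the left flip is safe at $r_1$ automatically, so for each flip only one ray is ever in danger; (ii) write the "safe at $r_1$" condition for the right flip as an inequality $\Phi_1 \ge W$, where $W$ is the angular width of the up-face packet at $b_2$ and $\Phi_1$ is an explicit angle built from $\b$ and the position of $a_1$ along its altitude line; (iii) write the "safe at $r_k$" condition for the left flip as $\Phi_k \ge W$; (iv) prove $\Phi_1 + \Phi_k \ge 2W$ would be too strong, so instead establish the correct bookkeeping identity $\Phi_1 + \Phi_k \ge W + (\text{something} \ge W)$, equivalently $\max(\Phi_1,\Phi_k) \ge W$, which says at least one flip is safe; the cleanest route is to show $\Phi_1 + \Phi_k$ equals exactly the angle swept from $r_1$ to $r_k$ through the exterior region $R$, while $W$ is at most half of that by convexity of $A$ combined with $\b>0$. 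Then $\Phi_1 < W$ and $\Phi_k < W$ simultaneously would give $\Phi_1 + \Phi_k < 2W \le (\text{sweep from } r_1 \text{ to } r_k)$, contradicting the identity, so one of the flips is safe.

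The main obstacle I expect is step (iv): pinning down the exact relationship between the angular width $W$ of the flipped up-face packet and the angle subtended at $b_2$ by the altitude region $R$, and showing the "half" bound is the right one. Fact~2 about the convex–reflex–convex structure of the $a$-chain, and fact~1 about an $a$-chain spanning at most half of $A$, are exactly the inputs that should force this, but translating them into the needed inequality requires carefully tracking how the supporting-plane conditions at $a_1$ and $a_k$ (the planes of $B_1$ and $B_2$ support $A$) pin the tangent directions relative to the altitude directions. Everything else—the reflections being isometries that fix $a_s$ or $a_t$, the automatic safety of each flip on its "own" side, and Lemma~\lemref{AltitudeRays} guaranteeing $R$ is a genuine wedge-like region—is routine once that central angle inequality is in hand.
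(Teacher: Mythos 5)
Your high-level strategy---show that the two ``unsafe'' events are complementary, with the positive base angle $\b$ supplying the slack that prevents both from occurring---is the same idea that drives the paper's proof. But the concrete mechanism you propose for the decisive step (your step (iv)) does not work, and you have correctly identified it as the place where the argument is not yet closed. The claimed bound ``$W$ is at most half of the sweep from $r_1$ to $r_k$'' is false in general: when the tangency points $a_s,a_t$ sit at (or near) the ends of the $a$-chain, the up-face packet is (nearly) the entire $A$-fan, whose apex angle at $b_2$ falls short of the angular extent of $R$ at $b_2$ only by the curvature $\kap_{b_2}$, which can be arbitrarily small. So $W$ can be essentially the whole sweep, the inequality $\Phi_1+\Phi_k\ge 2W$ fails, and the contradiction you want never materializes. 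A second, smaller issue is that your bookkeeping treats $r_1$ and $r_k$ as rays from $b_2$, whereas they emanate from $a'_1$ and $a'_k$ perpendicular to $b_1b_2$ and $b_2b_3$; ``angular room at $b_2$'' does not directly encode the safety condition, so some worst-case normalization of the rays is needed before any angle identity can be written down.

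The paper's proof avoids any factor-of-two claim. It first reduces to the worst case in which $r_1$ and $r_k$ are at their ``highest,'' i.e.\ pass through the tangency points themselves (Fig.~\figref{FlipLemma}). It then argues a direct implication between the two specific failure events: if reflecting $a_s$ across the left tangent $b_2a_t$ carries $a'_s$ past $r_k$, then the perpendicular at $a_t$ must cross the segment $b_2a_s$; and because that perpendicular makes an angle $\b$ with the reflection segment $a_ta'_t$ of the opposite flip, the reflection across $b_2a_s$ necessarily stays inside. The slack is the single angle $\b>0$ between the two base edges (equivalently, between the directions of the two altitude rays), not a bound on $W$ relative to the total sweep. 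If you want to complete your version, replace step (iv) by this direct ``one flip unsafe $\Rightarrow$ the other flip safe'' implication, using the perpendicularity of each reflection segment to its tangent line and of each altitude ray to its base edge; your steps (i)--(iii) then become unnecessary scaffolding.
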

\begin{proof}
See Sec.~\secref{Flip} in the Appendix.
\end{proof}
The remaining cases are minor variations on those illustrated, and will not be further detailed. See Fig.~\figref{FlatFlippingCase3} in the Appendix.

\subsection{Nesting in $\P(z)$ regions}
\seclab{NestingPz}

The most difficult part of the proof is showing that the nesting established
above for $\P(0)$ holds for $\P(z)$.
A key technical lemma is this:
\begin{lemma}
Let $\triangle b ,a_1(z) ,a_2(z)$ be an $A$-triangle, with angles
$\a_1(z)$ and $\a_2(z)$ at $a_1(z)$ and $a_2(z)$ respectively.
Then $\a_1(z)$ and $\a_2(z)$ are monotonic from their $z=0$ values
toward $\pi/2$ as $z \to \infty$.
\lemlab{AngleMonotonicity}
\end{lemma}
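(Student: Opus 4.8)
The plan is to choose coordinates adapted to the two parallel planes, write $\a_1(z)$ and $\a_2(z)$ as explicit functions of $z$, and then read off monotonicity from an elementary one-variable fact. Place the base plane at height $0$ and the top plane at height $z$. Since changing the height of the prismatoid amounts to raising $A$ rigidly above $B$ (Lemma~\lemref{zto0}), each top vertex has the form $a_i(z) = (u_i, z)$ for a \emph{fixed} point $u_i \in \R^2$, while $b = (w,0)$ is fixed. The observation that drives everything is that the edge $a_1(z)\,a_2(z)$ stays horizontal and keeps the $z$-independent length $\ell := |u_1 - u_2| > 0$.

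First I would compute the angle at $a_1(z)$, namely $\a_1(z) = \angle\, a_2(z)\,a_1(z)\,b$. We have $a_2(z) - a_1(z) = (u_2 - u_1, 0)$ and $b - a_1(z) = (w - u_1, -z)$, and the vertical term $-z$ is orthogonal to the horizontal vector $a_2(z)-a_1(z)$, so it contributes nothing to the dot product in the numerator. Writing the constants $c_1 := (u_2 - u_1)\cdot(w - u_1)$ and $d_1 := |w - u_1|^2$, the dot-product formula for an angle gives
\[
 \cos \a_1(z) \;=\; \frac{c_1}{\ell\,\sqrt{d_1 + z^2}}\,.
\]
The symmetric computation at $a_2(z)$ yields $\cos \a_2(z) = c_2 / (\ell\sqrt{d_2 + z^2})$ with $c_2 := (u_1 - u_2)\cdot(w - u_2)$ and $d_2 := |w - u_2|^2$ again constants.

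Finally I would finish with the monotonicity step. The map $z \mapsto \sqrt{d_i + z^2}$ is nondecreasing on $[0,\infty)$ (strictly increasing for $z > 0$) and tends to $\infty$, so $\cos \a_i(z)$ retains the fixed sign of $c_i$ while its absolute value decreases monotonically to $0$; in particular $\a_i(z)$ never crosses $\pi/2$. Composing with the decreasing bijection $\arccos\colon [-1,1] \to [0,\pi]$, we conclude that $\a_i(z)$ is monotonic in $z$ and $\a_i(z) \to \pi/2$ as $z \to \infty$ --- increasing toward $\pi/2$ if the flat value $\a_i(0)$ is acute, decreasing toward it if $\a_i(0)$ is obtuse, and identically $\pi/2$ in the degenerate case $c_i = 0$ (the monotonicity being strict on $z > 0$ except in that case). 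There is no real obstacle here once the coordinates are chosen; the only point worth spelling out is that the $z$-variation of $\P$ is the rigid vertical motion of $A$, which is exactly the content of Lemma~\lemref{zto0} and is what pins $c_1, c_2$ down as constants.
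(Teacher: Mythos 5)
Your proof is correct and takes essentially the same route as the paper: place $b$ in the base plane and $a_1,a_2$ at height $z$, express the cosine of the angle via the dot product as a constant over $\sqrt{d+z^2}$, and observe that this tends monotonically to $0$ with fixed sign. Your version is if anything slightly cleaner, since you compute $\cos\a_1(z)$ directly from the two edge vectors emanating from $a_1$, avoiding the paper's detour through ``$\cos\a_1$ or $\cos(\pi-\a_1)$.''
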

\begin{proof}
See Sec.~\secref{AngleMonotonicity} in the Appendix.
\end{proof}

I should note that it is not true, as one might hope, that the apex angle at $b$ of that $A$-triangle, $\angle  a_1(z), b, a_2(z)$,
shrinks monotonically with increasing $z$, even though its limit as $z \to \infty$ is zero.
Nor is the angle gap $\kap_b(z)$ necessarily monotonic.  These nonmonotonic angle variations
complicate the proof.

Another important observation is that the sorting of $b a_i$ edges by length in $\P(0)$ remains
the same for all $\P(z)$, $z>0$.  More precisely,
let $| b a_i | > | b a_j|$ for two lateral edges connecting vertex $b \in B$
to vertices $a_i,a_j \in A$ in $\P(0)$.
Then $| b a_i(z) | > | b a_j(z)|$ remains true for all $\P(z)$, $z>0$
(by reasoning detailed in Lemma~\lemref{CRz}).

For the nesting proof, I will rely on a high-level description, and one difficult instance.
At a high level, each of the convex or reflex sections of the $a$-chain are enclosed
in a triangle, which continues to enclose that portion of the $a$-chain for any $z>0$
(by Fact~1, Sec.~\secref{CRz}).
See Fig.~\figref{AllConvexEncTri} in the Appendix for the convex triangle enclosure.
The reflex enclosure is determined by the tangents from $b_2$ to $A$: 
$\triangle a_s b_2 a_t$.
So then the task is to prove these (at most three) triangles remain within $R(z)$.
Fig.~\figref{FlippingCR} shows a case where there is both a convex and a reflex section.
Were there an additional convex section, it would remain attached to $B_1(z)$ and
would not increase the challenge.
\begin{figure}[htbp]
\centering
\includegraphics[width=\linewidth]{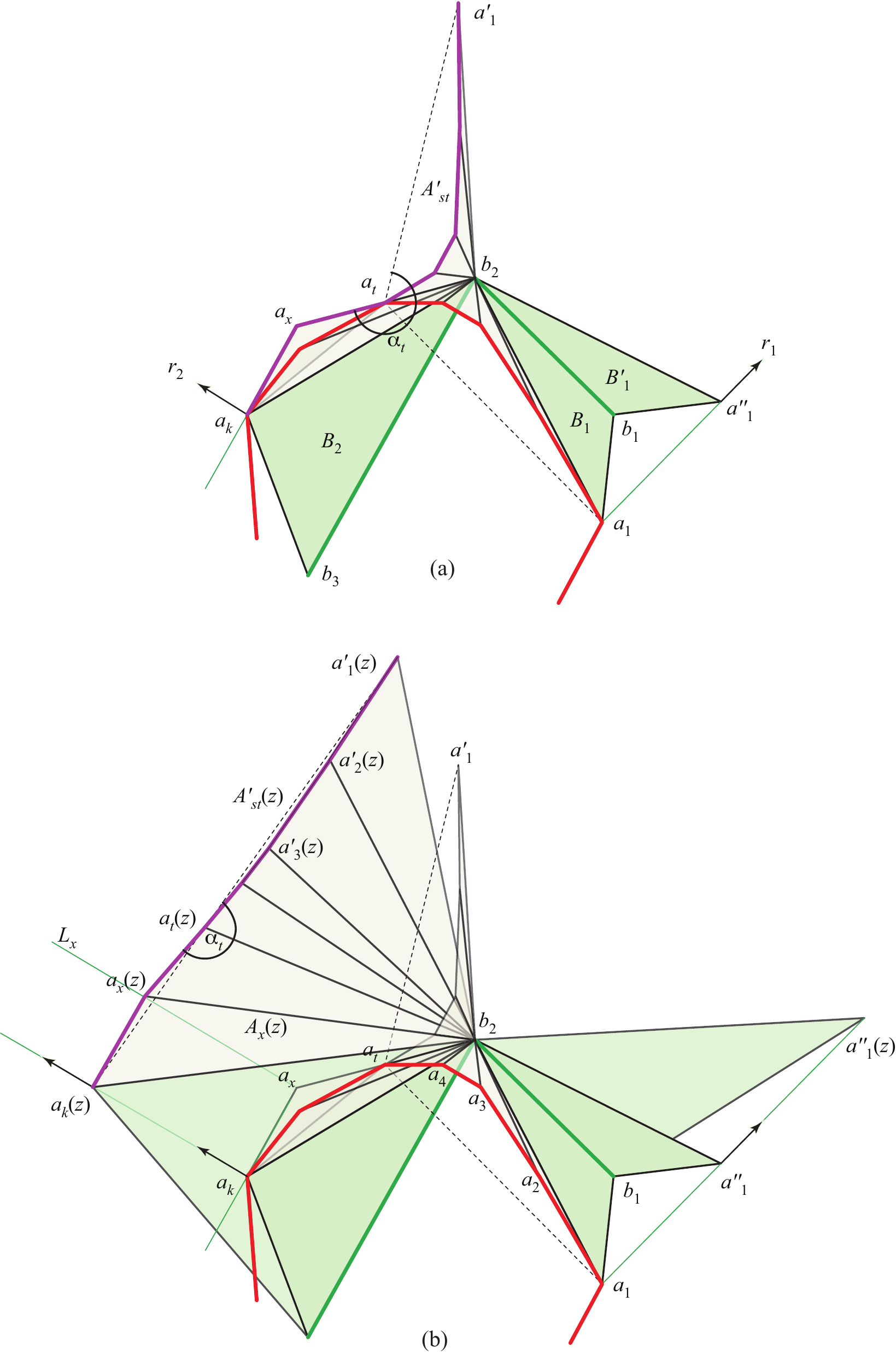}
\caption{(a)~$z=0$. $\triangle a_t a_x a_k$ encloses the convex section,
and $\triangle a_1 b_2 a_t$ encloses the reflex section. (b)~$z > 0$.
Reflex angle $\a_t(z)$ decreases as $z$ increases.}
\figlab{FlippingCR}
\end{figure}

\begin{lemma}
If the $a$-chain consists of a convex and a reflex section, and the safe flip 
(by Lemma~\lemref{Flip}) is to
a side with a down-face ($B_2$ in the figure), then $AF'(z) \subset R(z)$:
the $A$-fan unfolds within the altitude region for all $z$.
\lemlab{CRz}
\end{lemma}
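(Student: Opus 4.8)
The plan is to enclose the unfolded $A$-fan in two triangles whose shapes stay controlled by the monotone quantities of Lemma~\lemref{AngleMonotonicity}, and then to test containment in $R(z)$ not against the moving altitude rays but against the \emph{fixed} altitude lines carrying them (Lemma~\lemref{AltitudeRaysUnf}); this is what keeps the non-monotone quantities $\kap_{b_2}(z)$ and $\angle a_1(z)\,b_2\,a_k(z)$ out of every inequality.

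Fix the cut as in Fig.~\figref{FlippingCR}: the convex section $a_t,\dots,a_k$ of the $a$-chain (its down-faces) stays attached to $B_2$ and unfolds rigidly with it about $b_2a_k$, while the reflex section $a_1,\dots,a_t$ (its up-faces) is flipped across the tangent line $b_2a'_t$ toward $B_2$, which by Lemma~\lemref{Flip} is the safe choice at $z=0$. One first shows that each section is enclosed, for \emph{every} $z$, in a triangle of controlled shape: the convex chain is convex (Fact~2) and the lateral-edge lengths $|b_2a_i(z)|$ keep their $z=0$ order (as noted before the lemma), so extending its first and last edges to meet at $a_x(z)$ gives an enclosing triangle $\triangle a_t(z)\,a_x(z)\,a_k(z)$ that, by Lemma~\lemref{AngleMonotonicity}, shrinks toward the chord $a_t a_k$ as $z$ grows; the reflex chain sits in the wedge $\triangle a'_1(z)\,b_2\,a'_t(z)$ cut by the two tangent rays from $b_2$, and since the reflex angle $\a_t(z)$ decreases monotonically (a consequence of Lemma~\lemref{AngleMonotonicity}), the flipped image of this wedge lies, for every $z$, inside its $z=0$ flipped position swept toward the segment $a_t b_2$.

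It then remains to place both enclosures inside $R(z)$. For the flipped reflex wedge: it is strictly between $r_1$ and $r_2$ at $z=0$ by the safe-flip guarantee of Lemma~\lemref{Flip}; combining this with the monotone shrinking of $\a_t(z)$ and the fact that the only free-moving vertex $a'_1(z)$ rides outward along the fixed altitude line of $B_1$ (Lemma~\lemref{AltitudeRaysUnf}), while the altitude lines themselves never cross (Lemma~\lemref{AltitudeRays}), one sees the flipped wedge can only retreat inward from $r_1$ and $r_2$, never cross either. For the convex enclosure: it meets the flipped reflex wedge at $a_t$ and otherwise lies on the $B_2$ side of the line $b_2a'_t$, so it cannot reach back toward $r_1$; the one possibility to rule out is crossing $r_2$. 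Since $a'_k(z)$ rides outward along the fixed altitude line of $B_2$ while $b_2,b_3$ stay fixed (Lemma~\lemref{AltitudeRaysUnf}), I would argue --- exactly as in the flat case but now with every ingredient monotone in $z$ --- that the convex chain touching $r_2$ would force one of its turning angles past $\pi/2$, contradicting Lemma~\lemref{AngleMonotonicity}.

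The main obstacle, and the reason this case is the hard one, is that nothing is monotone at the level of regions or of the whole fan: $R(z)$ does not nest (its gap $\kap_{b_2}(z)$ wiggles) and the fan's apex angle at $b_2$ is not monotone either, so a naive ``nesting is preserved'' argument fails. The device that rescues the proof is to replace every ``inside $R(z)$'' test by ``on the correct side of a fixed altitude line,'' and to drive every resulting inequality by the two genuinely monotone facts --- Lemma~\lemref{AngleMonotonicity} and the invariant edge-length order. The residual delicate point is the last implication above: making ``convex chain touches $r_2$ $\Rightarrow$ a turning angle exceeds $\pi/2$'' precise, which I expect will need a short dedicated computation relating the position of $a'_k(z)$ on its altitude line to the base angle $\b=\angle b_1b_2b_3$.
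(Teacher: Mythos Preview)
Your outline has the right architecture---enclose the convex and reflex portions in triangles and track them via monotone quantities---but two of the load-bearing steps are wrong or missing.

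\textbf{The flipped $a'_1(z)$ does not ride an altitude line.} You invoke Lemma~\lemref{AltitudeRaysUnf} to say ``the only free-moving vertex $a'_1(z)$ rides outward along the fixed altitude line of $B_1$.'' That lemma applies to the apex of the \emph{unfolded base triangle} $B_1$, i.e., to the image of $a_1$ obtained by rotating $B_1$ about $b_1b_2$. But the $a'_1$ in question here is a different point: it is the terminal vertex of the reflex chain after the chain has been unfolded attached to $B_2$ and then reflected across the tangent line $b_2a'_t$. That point is governed by the lengths $|b_2a_i(z)|$ and the angles along the flipped chain, not by any altitude line, and in particular it does not sit on $r_1$. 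So your mechanism for keeping the flipped wedge inside $R(z)$ collapses.

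\textbf{The missing $3\pi/2$ argument.} The paper's actual device here is the one you never reach: since everything is attached to $B_2$, the only danger is the flipped $a'_1(z)$ crossing $r_2$. For that to happen the reflex angle $\a_t(z)$ at the hinge $a_t$ must exceed $3\pi/2$ (so that the reflected segment $a_t a'_1$ swings past the line $L_x \parallel r_2$). But $\a_t(z)$ is monotone decreasing toward $\pi$ (Fact~2), so it would have to \emph{start} at $\ge 3\pi/2$; and a flat-case computation (Fig.~\figref{FlipLemma2}) shows that $\a_t(0)\ge 3\pi/2$ makes this flip unsafe at $z=0$, contradicting Lemma~\lemref{Flip}'s choice. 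This closed loop---monotone $\a_t$ plus the safe-flip selection---is the heart of the proof, and your proposal does not contain it.

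\textbf{The convex enclosure.} Your $a_x$ is the intersection of the extended first and last chain edges; the paper instead takes $a_x$ on the line through $a_k$ \emph{parallel to $b_2b_3$}, intersected with the line $b_2a_t$. That choice is not cosmetic: it makes $a_x(z)a_k(z)\perp r_2$ for every $z$, so $a_x(z)$ slides along a line $L_x$ parallel to $r_2$, and containment of the convex enclosure in $R(z)$ is immediate. Your alternative leaves you with the admittedly unfinished ``turning angle past $\pi/2$'' claim, which is neither needed nor, as stated, correct.
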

\begin{proof}
See Sec.~\secref{CRz} in the Appendix.
\end{proof}

I have been unsuccessful in unifying the cases in the analysis,
despite their similarity.  Nevertheless, the conclusion is this theorem:
\begin{theorem}
Every triangulated topless prismatoid has a petal unfolding.
\thmlab{ToplessPrismatoids}
\end{theorem}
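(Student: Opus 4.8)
The plan is to assemble the lemmas above into a single petal unfolding and to certify non-overlap through the altitude partition. First I would carry out the \emph{base unfolding}: cut no edge of $B$ and rotate each $B$-triangle $B_i$ about its base edge $b_i b_{i+1}$ into the base plane, obtaining $BU$. By Lemma~\lemref{AltitudeRays} the altitude rays $r_i$ cut the closed exterior of $BU$ into pairwise-disjoint regions $R_i$, where $R_i$ is the region ``owned'' by the vertex $b_i$ and the two flanking base-triangle edges at $b_i$. The key reduction is then: it suffices to route the $A$-fan $AF_i$ into $R_i$, i.e.\ to partition $AF_i$ into a cw group (glued to the $B$-triangle on the left of $b_i$) and a ccw group (glued to the one on the right) whose unfolded union lies in $R_i$. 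Since $B$ occupies the bounded complementary region and the $R_i$ are disjoint and exterior to $BU$, such a layout overlaps neither $B$ nor any other fan or $B$-triangle, hence is a valid petal unfolding.

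Next I would fix one $A$-fan, say $AF_2$ between $B_1$ and $B_2$ with $a$-chain $a_1,\dots,a_k$, and settle the \emph{flat} prismatoid $\P(0)$ first. By Fact~3 of Sec.~\secref{A-fan}, $AF_2$ is a block of down-faces, then up-faces, then down-faces (any block possibly empty), so the unfolded $a$-chain is convex, then reflex, then convex. Following the case analysis of Sec.~\secref{FlatCaseAnalysis}, I would split on the position of $b_2$ relative to $A$. If $b_2$ lies in the convex cone $C$ the whole fan is down-faces and simply remains attached to whichever of $B_1,B_2$ is itself a down-face. If $b_2$ lies on the reflex side, the up-face block is reflected across one of the two tangents from $b_2$ to $A$, and Lemma~\lemref{Flip} guarantees that at least one of the two choices keeps every reflected triangle between $r_1$ and $r_k$; the flanking down-face blocks are then split between $B_1$ and $B_2$ in the evident way. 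The configurations with mixed up/down types for $B_1,B_2$, or with empty blocks, are minor variants of these. This exhibits, for $\P(0)$, a partition of $AF_2$ whose unfolded union lies in $R_2(0)$.

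The last step is to show the nesting survives as the plate separation grows from $0$ to its true value $z$. By Lemmas~\lemref{zto0} and~\lemref{AltitudeRaysUnf}, the combinatorics of $\P$, of $BU$, and of the altitude partition are all independent of $z$: each $a'_j(z)$ rides outward along the fixed line $r_i$, and only the angle gap $\kap_{b_2}(z)$ and the lateral-edge lengths $|b_2 a_i(z)|$ change, so $R_2(z)$ deforms predictably. I would enclose each convex block of the $a$-chain in a fixed combinatorial triangle and the reflex block in $\triangle a_s b_2 a_t$ (the tangent triangle from $b_2$), and then, using Lemma~\lemref{AngleMonotonicity} (the base angles of each $A$-triangle move monotonically toward $\pi/2$) together with the $z$-invariance of the sorting of the lengths $|b_2 a_i|$, argue that each of these at most three enclosing triangles stays inside $R_2(z)$ for every $z$. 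Lemma~\lemref{CRz} already does this for the representative convex-plus-reflex configuration; the purely convex and purely reflex configurations are easier. Repeating this for every $A$-fan gives $AF_i'(z)\subset R_i(z)$ for all $i$ and all $z$, which with the disjointness of the $R_i$ completes the unfolding and proves the theorem.

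I expect the $z$-variation step to be the main obstacle. That is where genuine metric control is needed rather than combinatorics, and where the nuisance facts---that the apex angle at $b_2$ and the gap $\kap_{b_2}(z)$ need not be monotonic---have to be absorbed; it is also where the case analysis I glossed over above (mixed $B$-triangle types, the direction of the safe flip, empty blocks) really has to be run to the ground, which is exactly why a single unified argument has so far been elusive.
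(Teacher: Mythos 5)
Your proposal follows the paper's own four-step outline essentially verbatim: base unfolding plus the altitude partition (Lemma~\lemref{AltitudeRays}), reduction to nesting each $A$-fan in its region $R_i$, the flat-prismatoid case analysis with the safe flip (Lemma~\lemref{Flip}), and the persistence of nesting for all $z$ via Lemmas~\lemref{zto0}, \lemref{AltitudeRaysUnf}, \lemref{AngleMonotonicity}, and~\lemref{CRz}. This matches the paper's argument, including its candid acknowledgment that the $z$-variation step and the uncombined case analysis are where the real work lies.
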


It is natural to hope that  further analysis will lead to a safe placement of the top $A$
(which might not fit into any altitude-ray region: see Fig.~\figref{DrumOverlap} in the Appendix.

\section{Unfolding Vertex-Neighborhoods}
\seclab{Vertex-Neighborhoods}
Let $N_e(B)$ for a face $B$ of a convex polyhedron $\P$
be $B$ plus the set of all faces that share an edge with $B$,
and $N_v(B)$ be $B$ plus  the set of all faces that share a vertex with $B$.
So $N_v(B) \supseteq N_e(B)$.
As mentioned previously, Pincu proved that $N_e(B)$ has a petal unfolding.
Here we show that $N_v(B)$ does not always have a petal unfolding,
even when all faces in the set are triangles.

A portion of the a 9-vertex example $\P$ that establishes this negative
result is shown in 
Fig.~\figref{CatsEye3D1}.
The $b_1 b_3$ edge of $B$ lies on the horizontal $xy$-plane.
The vertices $\{ b_2, a_1, a_2, c_1, c_2 \}$ all lie on a parallel plane
at height $z$, with $b_2$ directly above the origin: $b_2=(0,0,z)$.

All of $N_v(B)$ is shown in 
Fig.~\figref{CatsEye3D2}.  The structure in Fig.~\figref{CatsEye3D1}
is surrounded by more faces designed to minimize curvatures at the
vertices $b_i$ of $B$.
Finally, $\P$ is the convex hull of the illustrated vertices,
which just adds a quadrilateral ``back'' face $(p_1,c_1,c_2,p_3)$ (not
shown).

The design is such that there is so little rotation possible in the cw and ccw
options for the triangles incident to a vertex of $B$,
that overlap is forced:  see Figs.~\figref{Overlap1}, ~\figref{Overlap2}, and~\figref{Overlap3}.
The thin $\triangle b_2 a_1 a_2$ overlaps in the vicinity of $a_1$ if rotated ccw,
and in the vicinity of $a_2$ is cw (illustrated).
\begin{figure}[htbp]
\begin{minipage}[b]{\linewidth}
\centering
\includegraphics[width=\linewidth]{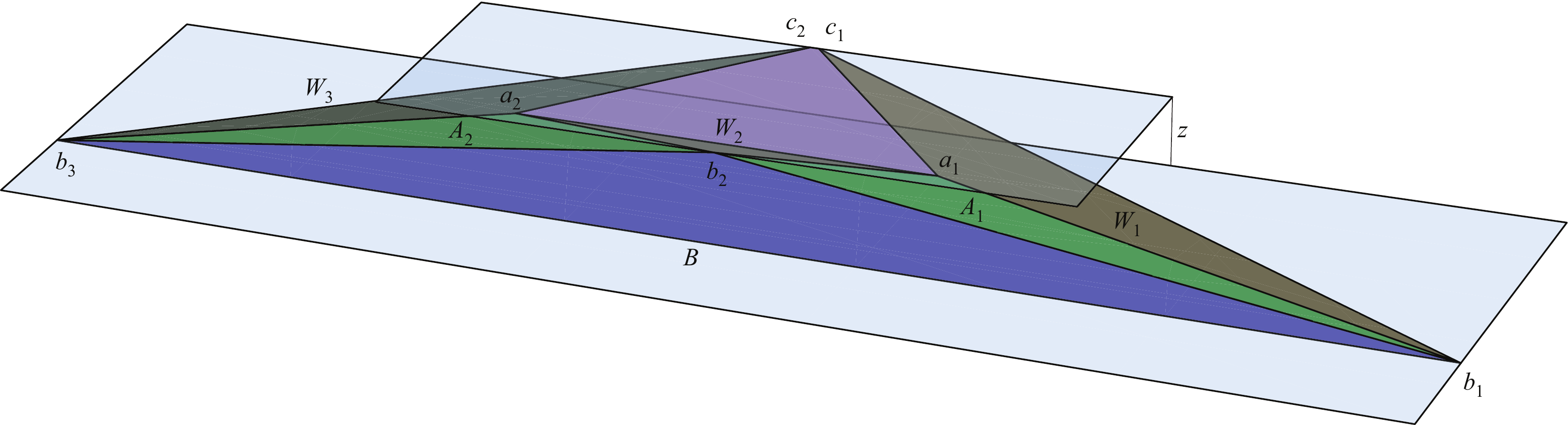}
\caption{Faces of $\P$ in the immediate vicinity of $B$.}
\figlab{CatsEye3D1}
\centering
\includegraphics[width=\linewidth]{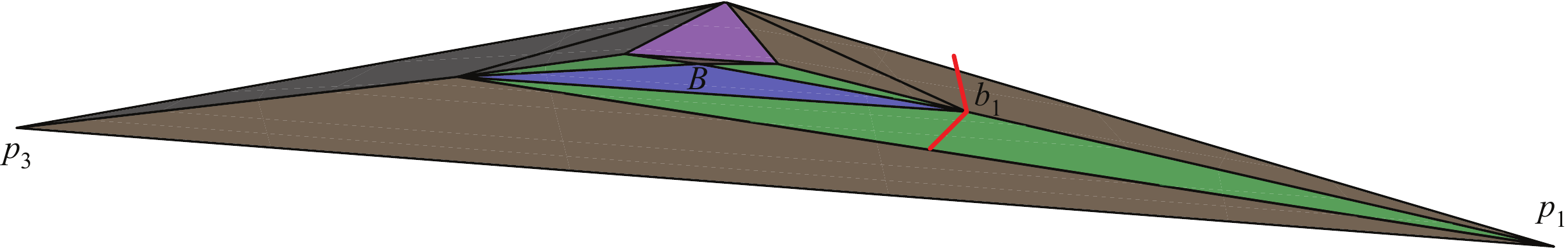}
\caption{All faces incident to $N_v(B)$, and one more, the purple
  quadrilateral $(a_1,c_1,c_2,a_2)$.
The red vectors are normal to $B$ and to $\triangle b_1 p_1 c_1$.}
\figlab{CatsEye3D2}
\centering
\includegraphics[width=\linewidth]{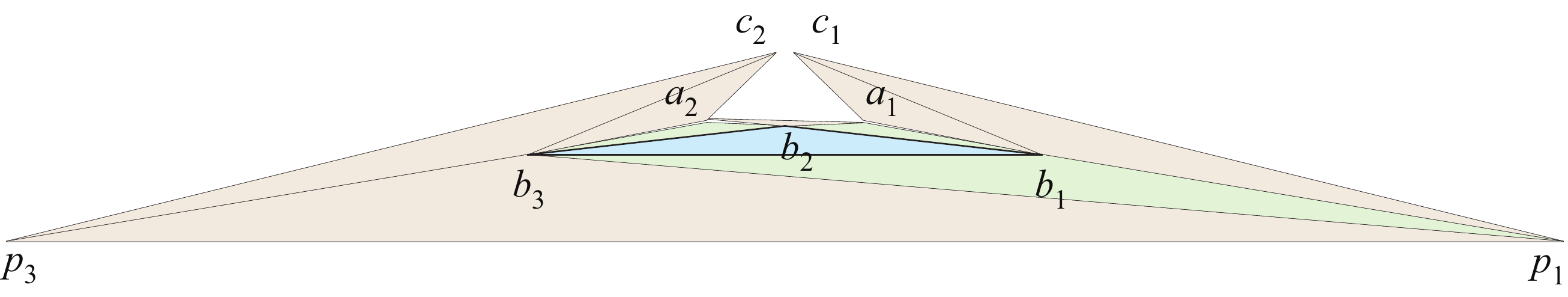}
\caption{Complete unfolding of all faces incident to $B$.}
\figlab{Overlap1}
\centering
\includegraphics[width=0.75\linewidth]{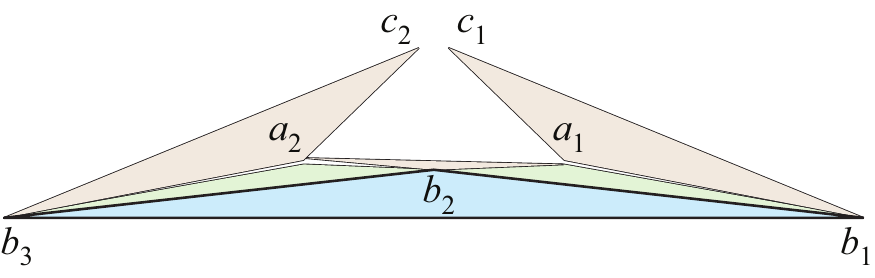}
\caption{Zoom of Fig.~\protect\figref{Overlap1}.}
\figlab{Overlap2}
\centering
\includegraphics[width=0.75\linewidth]{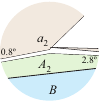}
\caption{Zoom of Fig.~\protect\figref{Overlap2} in vicinity of $a_2$
  overlap.
The angle gap at $b_3$ is $0.8^\circ$, and the gap at $b_2$ is $2.8^\circ$.}
\figlab{Overlap3}
\end{minipage}%
\end{figure}
Explict coordinates for the vertices of $\P$ are given in Sec.~\secref{Coordinates} of the Appendix.

One can identify two features of the polyhedron just described that
led to overlap: low curvature vertices (to restrict freedom) and obtuse face angles (at $a_1$ and $a_2$)
(to create ``overhang").
Both seem necessary ingredients.
Here I pursue
pursue excluding obtuse angles:

\begin{theorem}
If $\P$ is nonobtusely triangulated, then for every face $B$,
$N_v(B)$ has a petal unfolding.
\thmlab{NonObtuse}
\end{theorem}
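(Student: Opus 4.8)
The plan is to adapt the petal-unfolding machinery of Sections~\ref{sec:ToplessPetalUnfolding}--\ref{sec:NestingPz} to the patch $N_v(B)$, using nonobtuseness to kill exactly the two failure modes identified after the cat's-eye counterexample: overhang from obtuse face angles, and the cramped rotation freedom at low-curvature vertices of $B$. First I would set up the base unfolding: rotate $B$'s incident edge-neighbors (the ``$B$-triangles'') flat into the plane of $B$ around their $B$-edges, and define the altitude partition exactly as in Section~\ref{sec:AltitudePartition}, with an altitude ray $r_i$ from each unfolded apex $a'_j$ along the altitude of $B_i$, and regions $R_i$ incident to each vertex $b_i$ of $B$ bounded by $r_{i-1}$ and $r_i$. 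Lemma~\ref{lemma:AltitudeRays} gives that these rays do not cross, so the $R_i$ tile the exterior of the base unfolding. The goal, as before, is to show that the fan of remaining faces of $N_v(B)$ incident to $b_i$ (the analogue of the $A$-fan, but now a fan of arbitrary nonobtuse triangles, not a topless-prismatoid fan) can be split into a cw group and a ccw group, each nesting inside $R_i$.

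The heart of the argument is the local claim at a single vertex $b=b_i$ of $B$ with base angle $\beta=\angle b_{i-1} b b_{i+1}$: the fan of triangles incident to $b$ but not shared with $B$, together with the two half-$B$-triangles at $b$, occupies at $b$ a total angle that, when unfolded flat, is at most $\pi$ minus the (nonnegative) curvature $\kappa_b$. Because the fan of triangles around $b$ in $\P$ subtends total angle $2\pi-\kappa_b$ at $b$, and the two $B$-edges at $b$ carry angle $\beta$ inside $B$, the fan of non-$B$ faces at $b$ subtends exactly $2\pi-\kappa_b-\beta-(\text{the two base-triangle angles at }b)$... more simply: the wedge available in the plane between the two unfolded $B$-triangle edges emanating from $b$ has opening $\pi-\kappa_b$ on the outer side, and I must fit the unfolded non-$B$ fan into it. I would show that one can choose the cw/ccw split so that each group's unfolded image is a convex chain contained in a triangle with apex $b$, and then invoke the nonobtuse hypothesis: since every triangle in the fan has all angles $\le\pi/2$, the foot of the perpendicular from $b$ to the far side of each such triangle lies on the side (not its extension), so the triangle is contained in the union of the two right triangles cut by that altitude, and no part of it ``hangs back'' past the lines through $b$ carrying the extreme fan edges. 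This is precisely the property that fails at $a_1,a_2$ in the cat's-eye example. Combined with the convex/reflex structure of fans (Facts 1--3 of Section~\ref{sec:A-fan}, which apply verbatim since they used only $\beta>0$) and the safe-flip Lemma~\ref{lemma:Flip} for any reflex section, each group nests in the wedge $R_i(0)$ for the flat patch $\P(0)$.

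To pass from $\P(0)$ to $\P(z)$ I would reuse the monotonicity engine of Section~\ref{sec:NestingPz}: Lemma~\ref{lemma:zto0} (combinatorics fixed), Lemma~\ref{lemma:AltitudeRaysUnf} (apices ride out along the fixed altitude lines), and Lemma~\ref{lemma:AngleMonotonicity} (each base angle of an $A$-triangle moves monotonically toward $\pi/2$ as $z$ grows). The new point is that nonobtuseness is preserved in the limit $z\to\infty$ but need not hold at general $z<\infty$ --- however, it \emph{does} hold at $z=0$ by hypothesis, and Lemma~\ref{lemma:AngleMonotonicity} says each angle $\alpha_i(z)$ moves monotonically from its $\pi/2$-or-less starting value toward $\pi/2$, so $\alpha_i(z)\le\pi/2$ for \emph{all} $z\ge 0$. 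Thus the no-overhang property, and hence the triangle-enclosure argument of Section~\ref{sec:CRz}, persists for every $z$, and the same enclosing triangles (convex section, reflex section) stay inside $R_i(z)$ by the edge-length-sorting stability noted before Lemma~\ref{lemma:CRz}. The main obstacle I anticipate is exactly the bookkeeping at a vertex $b$ whose fan is \emph{not} a topless-prismatoid fan --- in $N_v(B)$ the chain $a_1,\dots,a_k$ need not lie between parallel supporting lines of a single convex polygon $A$, so Fact~1 of Section~\ref{sec:A-fan} must be re-derived purely from $\kappa_b\ge 0$ and local convexity of $\P$ around $b$; I expect this to go through (the fan is a geodesic disk of cone angle $\le 2\pi$), but verifying that the convex/reflex/convex decomposition and the two-tangent flip still make sense, and that the ``at most three enclosing triangles'' all fit in $R_i$ simultaneously, is where the real work lies.
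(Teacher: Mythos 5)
Your central insight---that nonobtuseness of every fan triangle keeps the foot of each altitude on the opposite side, so no triangle can ``overhang'' past a perpendicular erected at its attachment edge---is exactly the mechanism the paper uses: it defines \emph{diamond} regions $D_i \subset R_i$ bounded by such perpendiculars (Fig.~\figref{Diamond}) and observes that the fan at $b_i$ lands inside $D_i$ \emph{regardless of how it is grouped} into cw and ccw pieces. But you have wrapped that insight in machinery that does not apply here. Theorem~\thmref{NonObtuse} concerns $N_v(B)$ for a face of an \emph{arbitrary} nonobtusely triangulated convex polyhedron; the fan vertices at $b_i$ do not lie in a common plane parallel to $B$, so there is no height parameter, no flat limit $\P(0)$, and Lemmas~\lemref{zto0}, \lemref{AltitudeRaysUnf}, \lemref{AngleMonotonicity} and the whole $\P(0)\to\P(z)$ monotonicity engine have no meaning in this setting. (Your reading of Lemma~\lemref{AngleMonotonicity} is also off even for prismatoids: the nonobtuseness hypothesis is about $\P$ at its actual height, not about $\P(0)$; what saves you there is that the lemma's proof shows acute versus obtuse is determined by the sign of $x$ and hence is $z$-invariant.) Likewise Facts~1--2 of Sec.~\secref{A-fan} do not transfer ``verbatim'': Fact~1 uses that the $a$-chain lies on a convex polygon $A$ between parallel supporting lines, and Fact~2 is proved via the $\P(z)$ deformation.

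The second, softer problem is that the convex/reflex decomposition and the safe-flip Lemma~\lemref{Flip} are the wrong tools for this theorem: they are needed precisely when obtuse angles force a careful choice of grouping, whereas the content of Theorem~\thmref{NonObtuse} (and what Corollary~\corref{NOPrismatoid} exploits with its ``every petal unfolding'' conclusion) is that nonobtuseness makes \emph{every} cw/ccw split safe. The intended argument is purely local and static: each unfolded fan triangle, being nonobtuse, stays between the perpendiculars at the two endpoints of the edge along which it is attached, so the whole chain hanging off $b_i a'_j$ is trapped in its half of the diamond, and $D_i \subset R_i$ because the base triangles are themselves nonobtuse. Rebuilding your proof around that containment, with no deformation step, is what is required.
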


A \emph{nonobtuse triangle} is one whose angles are each $\le \pi/2$.
It is known that any polygon of $n$ vertices has a nonobtuse
triangulation by $O(n)$ triangles, which can be found in
$O(n \log^2 n)$ time~\cite{bmr-lsntp-95}.
Open Problem~22.6~\cite[p.~332]{do-gfalop-07} asked whether every
nonobtusely triangulated convex polyhedron has an edge unfolding.
One can view Theorem~\thmref{NonObtuse} as a (very small) advance on
this problem.

The nonobtuseness of the triangles permits identifying
smaller \emph{diamond} regions $D_i$ inside the altitude regions $R_i$ used
in Sec.~\secref{ToplessPetalUnfolding},
such that $D_i$ necessarily contains the $A$-fan triangles, regardless of how
they are grouped.
See Fig.~\figref{Diamond}(a).
\begin{figure}[htbp]
\centering
\includegraphics[width=\linewidth]{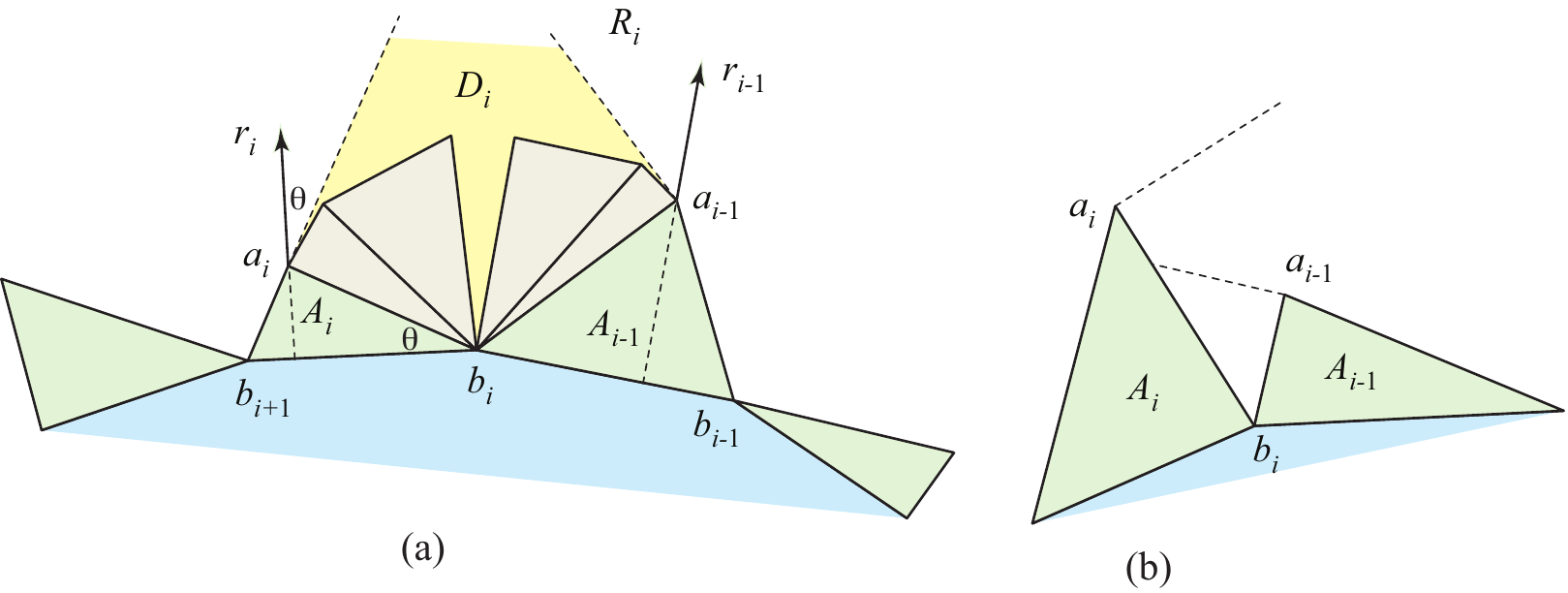}
\caption{(a)~$D_i \subset R_i$.
(b)~Perpendiculars cannot hit $A_i$ or $A_{i-1}$.}
\figlab{Diamond}
\end{figure}

A little more analysis leads to a petal unfolding of a (very special) class of prismatoids:
\begin{cor}
Let $\P$ be a triangular prismatoid all of whose faces, except possibly the base
$B$, are nonobtuse triangles, and the base is a (possibly obtuse)
triangle.  
Then every petal unfolding of $\P$ does not
overlap.
\corlab{NOPrismatoid}
\end{cor}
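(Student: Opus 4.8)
The plan is to reduce Corollary~\ref{cor:NOPrismatoid} to Theorem~\ref{thm:NonObtuse} together with the nesting machinery of Section~\ref{sec:ToplessPetalUnfolding}, by showing that a petal unfolding of the \emph{triangular} prismatoid $\P$ is, after discarding the top $A$, exactly a petal unfolding of the vertex-neighborhood $N_v(B)$, plus one extra triangle (the top $A$, which for a triangular prismatoid is itself a single triangle) hinged to one $A$-triangle. First I would observe that since $\P$ is a triangular prismatoid, $A=\triangle a_1 a_2 a_3$ is a triangle, the base $B=\triangle b_1 b_2 b_3$ is a triangle, and the topless prismatoid $\P\setminus A$ is precisely $N_v(B)$ for the polyhedron $\P$; moreover every lateral face is a triangle and, by hypothesis, nonobtuse. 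Thus Theorem~\ref{thm:NonObtuse} applies directly to $\P\setminus A$ and yields a petal unfolding in which every $A$-fan triangle incident to $b_i$ lies inside the diamond region $D_i\subset R_i$, and hence inside $R_i$, for every $z$ and every way of grouping the fan. This already shows the base unfolding together with all lateral triangles is nonoverlapping; what remains is to place the one remaining face, the top $A$.

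The core of the argument is then to show that attaching $A$ to any one $A$-triangle in this unfolding cannot create overlap. The key geometric input is the diamond region $D_i$ of Figure~\ref{fig:Diamond}: because the triangles are nonobtuse, each $D_i$ is a ``small'' region hugging the vertex $b_i$ and its two incident $B$-edges, and in particular the union of all the $A$-triangles lies strictly inside $\bigcup_i R_i$, away from the altitude rays $r_i$ bounding the regions. I would argue that the top $A$, when unfolded rigidly with the unique $A$-triangle it is hinged to, occupies at most the ``outermost'' part of a single region $R_i$ (beyond the $a$-chain vertices), and that nonobtuseness forces the perpendicular-foot structure of Figure~\ref{fig:Diamond}(b)---perpendiculars from the hinge edge cannot hit the neighboring fan triangles $A_i$ or $A_{i-1}$---so the unfolded copy of $A$ stays within that one region $R_i$ and does not reach any other $R_j$. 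Since the $R_i$ tile the exterior of the base unfolding (Lemma~\ref{lemma:AltitudeRays}), and within $R_i$ the combined footprint of the $A$-fan plus the attached copy of $A$ fits (this is where the nonobtuse angle bounds on $A$'s own angles enter, controlling how far $A$ swings out), no two unfolded pieces overlap. One then notes that the choice of which $A$-triangle carries $A$ is immaterial: every such choice lands $A$ inside the single region containing that triangle, so \emph{every} petal unfolding works, giving the ``every petal unfolding of $\P$ does not overlap'' conclusion rather than merely existence.

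The main obstacle I expect is the last point: controlling the copy of the top $A$ as it is rigidly carried by an arbitrary $A$-triangle. Unlike the fan triangles, whose $b$-apex pins them near $b_i$, the top $A$ is attached along a lateral edge $a_i a_j$ and its far vertex can in principle swing a long way out along (or past) an altitude ray. The nonobtuse hypothesis on $A$'s angles and on the $A$-triangles has to be leveraged to bound this swing---concretely, to show the farthest vertex of the unfolded $A$ stays on the correct side of both bounding rays $r_{i-1}$ and $r_i$ of the region $R_i$ it lives in. I would handle this by an explicit angle computation: the direction of the hinge edge $a_i' a_j'$ in the unfolding is determined (it is an edge of an already-placed fan triangle inside $D_i$), the interior angles of $A$ at $a_i,a_j$ are each $\le\pi/2$, and the altitude rays make a definite angle with the $B$-edges; chasing these inequalities shows $A'$ cannot cross $r_{i-1}$ or $r_i$. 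If that swing cannot be bounded in full generality by $R_i$ alone, the fallback is to invoke the diamond refinement: since the fan triangles only occupy $D_i$, there is slack in $R_i\setminus D_i$ into which $A'$ can safely extend, and a slightly more careful version of the Figure~\ref{fig:Diamond}(b) perpendicular argument closes the gap. Everything else---the reduction to $N_v(B)$, the applicability of Theorem~\ref{thm:NonObtuse}, and the monotonicity/nesting behavior in $z$ from Lemma~\ref{lemma:AngleMonotonicity} and Lemma~\ref{lemma:CRz}---is bookkeeping.
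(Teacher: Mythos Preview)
Your reduction to Theorem~\thmref{NonObtuse} for the topless part is exactly right, and you correctly isolate the remaining task as placing the single top triangle $A$. But your placement argument aims at the wrong containing region: you try to confine the unfolded top $A'$ to the altitude region $R_i$ (with fallback to the slack $R_i\setminus D_i$), and this need not hold. The paper instead introduces a strictly \emph{larger} cone $V_i \supset R_i$, bounded by the two rays from $b_i$ through the unfolded apexes of the adjacent $B$-triangles, and shows only that $A' \subset V_i$. The nonobtuseness of $A$ is used to say that $A$ projects onto its hinge edge, and since that hinge edge (the top of an $A$-triangle) lies inside $D_i$, one gets $A' \subset V_i$; but $V_i$ can spill across the altitude rays $r_{i-1}, r_i$ into neighboring $R_j$'s, so your ``chasing inequalities to show $A'$ cannot cross $r_{i-1}$ or $r_i$'' is chasing a false conclusion.

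This is also where you miss the one place the hypothesis that $B$ is a \emph{triangle} is actually used. The rays bounding $V_i$ pass through the apexes of the adjacent diamonds $D_{i-1}$ and $D_{i+1}$, so $V_i$ provably avoids those two adjacent $A$-fans---but nothing prevents $V_i$ from hitting a \emph{non}-adjacent diamond $D_j$. When $B$ is a triangle there are only three base vertices, so every other diamond \emph{is} adjacent to $D_i$, and the argument closes. Your sketch treats the triangularity of $B$ as incidental bookkeeping, but without it the containment argument does not finish (the paper explicitly flags this as the obstacle to extending the corollary to an arbitrary convex base). So the fix is: replace ``$A' \subset R_i$'' by ``$A' \subset V_i$, and $V_i$ misses all other diamonds,'' and then invoke the triangle base to reduce ``all other'' to ``the two adjacent.''
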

\begin{proof}
See Sec.~\secref{NOPrismatoid} in Appendix.
\end{proof}
Fig.~\figref{VTop} shows one illustration from the proof, which defines another region $V_i \supset R_i$
which does not overlap the adjacent diamonds $D_{i-1}$ and $D_{i+1}$, and into which it is safe to unfold the top $A$.
\begin{figure}[htbp]
\centering
\includegraphics[width=\linewidth]{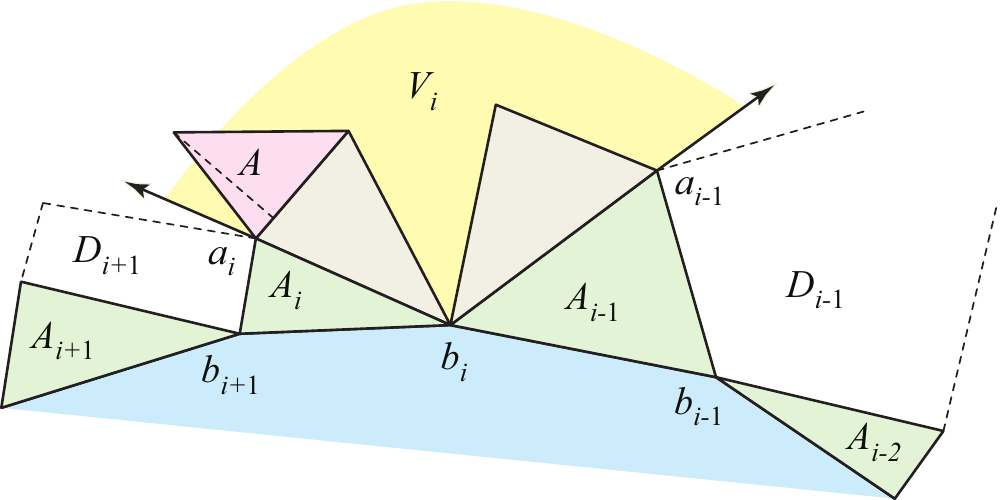}
\caption{The top $A$ of the prismatoid remains inside $V_i$.}
\figlab{VTop}
\end{figure}

\section{Unfolding Convex Patches}
\seclab{ConvexPatches}
I believe that unfolding convex patches may be a fruitful line of investigation.
For example, notice that the edges cut in a petal unfolding of a topless prismatoid
or of vertex-neighbhood of a face, form a disconnected spanning forest rather than a single spanning
tree.  One might ask:  Does every convex patch have an edge unfolding via single
spanning cut tree?
The answer is {\sc no}, already provided by the banded hexagon example
in
Fig.~\figref{BandedHex3D}.  For such a tree can only touch the boundary at one
vertex (otherwise it would lead to more than one piece), 
and then it is easy to run through the few possible spanning trees and show
they all overlap.

The term \emph{zipper unfolding} was introduced in~\cite{lddss-zupc-10} for
a nonoverlapping unfolding of a convex polyhedron achieved via Hamiltonian cut path.
They studied zipper edge-paths, following edges of the polyhedron, but
raised 
the interesting question of whether or not every convex
polyhedron has a zipper path, not constrained to follow edges, 
that leads to a nonoverlapping unfolding.
This is a special case of Open Problem~22.3
in~\cite[p.~321]{do-gfalop-07}
and still seems difficult to resolve.

Given the focus of this work, it is natural to specialize this
question further, to ask if every convex patch has a zipper unfolding, using
arbitrary cuts.
I believe the answer is negative:
a version of the banded hexagon shown in See Fig.~\figref{ThinApron}
has no zipper unfolding.
My argument for this is long and seems difficult to formalize, so I
leave the claim as a conjecture that, with effort,
the proof could be formalized.
\begin{figure}[htbp]
\centering
\includegraphics[width=0.75\linewidth]{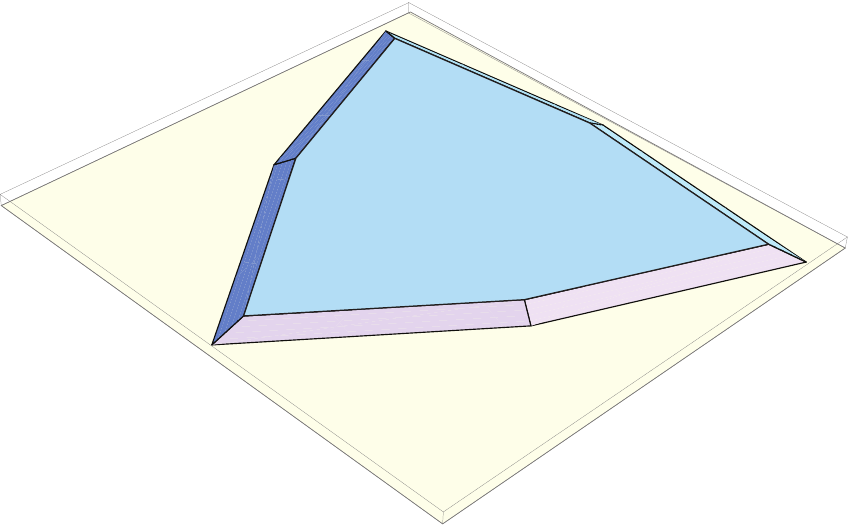}
\caption{The banded hexagon with a thin band.}
\figlab{ThinApron}
\end{figure}



\small
\bibliographystyle{alpha}
\bibliography{/Users/orourke/bib/geom/geom}

\newcommand{\etalchar}[1]{$^{#1}$}
\begin{thebibliography}{DDL{\etalchar{+}}10}

\bibitem[ADL{\etalchar{+}}07]{adlmost-eunpb-07}
Greg Aloupis, Erik~D. Demaine, Stefan Langermann, Pat Morin, Joseph O'Rourke,
  Ileana Streinu, and Godfried Toussaint.
\newblock Edge-unfolding nested polyhedral bands.
\newblock {\em Comput. Geom. Theory Appl.}, 39(1):30--42, 2007.

\bibitem[Alo05]{a-rps-05}
Greg Aloupis.
\newblock {\em Reconfigurations of Polygonal Structures}.
\newblock PhD thesis, McGill Univ., School Comput. Sci., 2005.

\bibitem[BMR95]{bmr-lsntp-95}
Marshall~W. Bern, Scott Mitchell, and Jim Ruppert.
\newblock Linear-size nonobtuse triangulation of polygons.
\newblock {\em Discrete Comput. Geom.}, 14:411--428, 1995.

\bibitem[DDL{\etalchar{+}}10]{lddss-zupc-10}
Erik Demaine, Martin Demaine, Anna Lubiw, Arlo Shallit, and Jonah Shallit.
\newblock Zipper unfoldings of polyhedral complexes.
\newblock In {\em Proc. 22nd Canad. Conf. Comput. Geom.}, pages 219--222,
  August 2010.

\bibitem[DO07]{do-gfalop-07}
Erik~D. Demaine and Joseph O'Rourke.
\newblock {\em Geometric Folding Algorithms: Linkages, Origami, Polyhedra}.
\newblock Cambridge University Press, July 2007.
\newblock \url{http://www.gfalop.org}.

\bibitem[O'R07]{o-bupc-07}
Joseph O'Rourke.
\newblock Band unfoldings and prismatoids: A counterexample.
\newblock Technical Report 087, Smith College, October 2007.
\newblock arXiv:0710.0811v2 [cs.CG]; \url{http://arxiv.org/abs/0710.0811}.

\bibitem[O'R12]{o-taspi-12}
Joseph O'Rourke.
\newblock Tetrahedron angles sum to $\pi$: {B}isector plane.
\newblock MathOverflow: \url{http://mathoverflow.net/questions/94586/}, April
  2012.

\bibitem[Pin07]{p-ofnp-07}
Val Pincu.
\newblock On the fewest nets problem for convex polyhedra.
\newblock In {\em Proc. 19th Canad. Conf. Comput. Geom.}, pages 21--24, 2007.

\end{thebibliography}
\normalsize

\clearpage
\section{Appendix}

\begin{figure}[htbp]
\centering
\includegraphics[width=\linewidth]{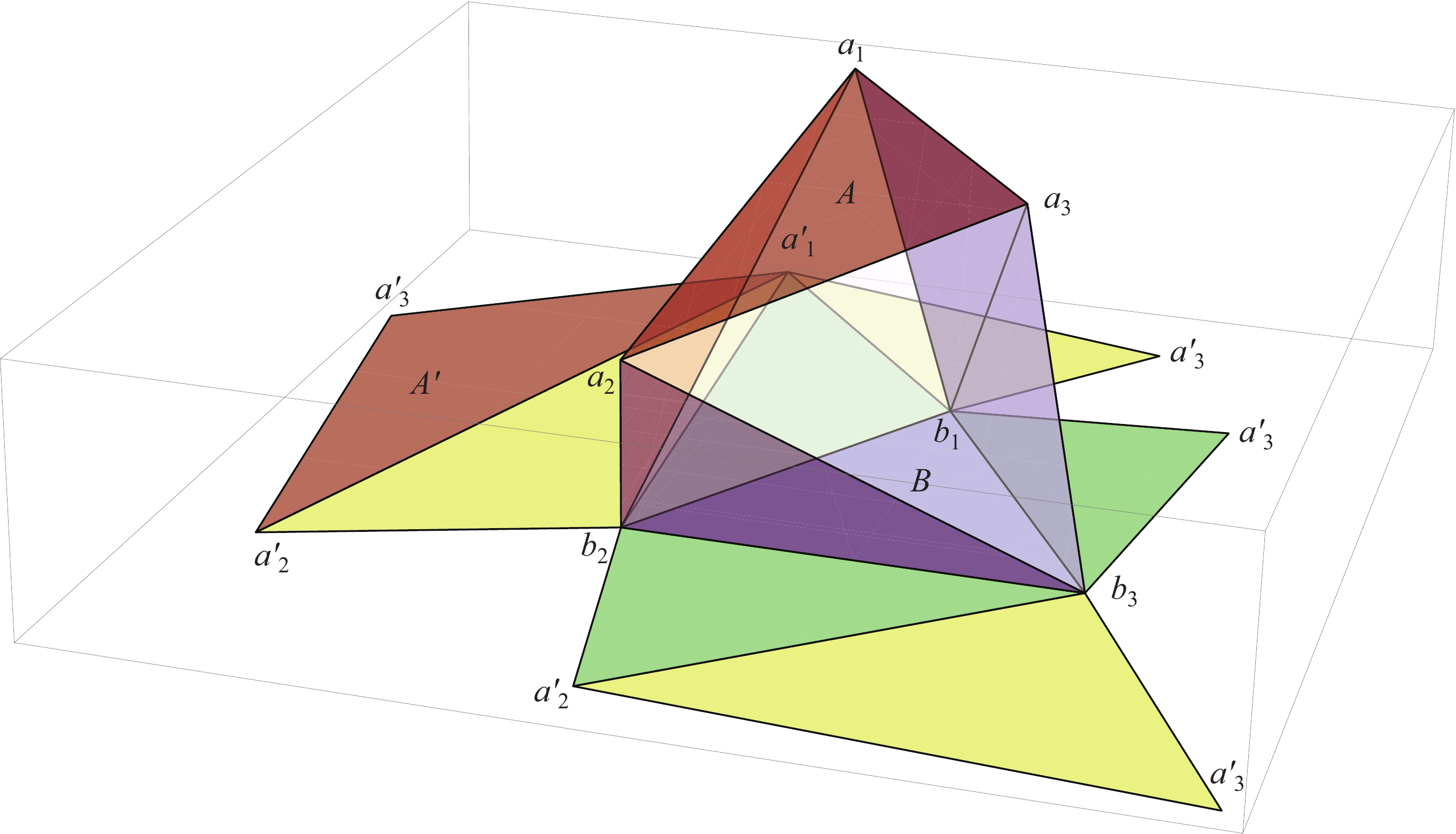}
\caption{A triangular prismatoid (top and bottom both triangles),
and one petal unfolding.  
The base $B$-triangles are green; the
top $A$-triangles are yellow.}
\figlab{TriPrismatoidExample}
\end{figure}

\begin{figure}[htbp]
\centering
\includegraphics[width=\linewidth]{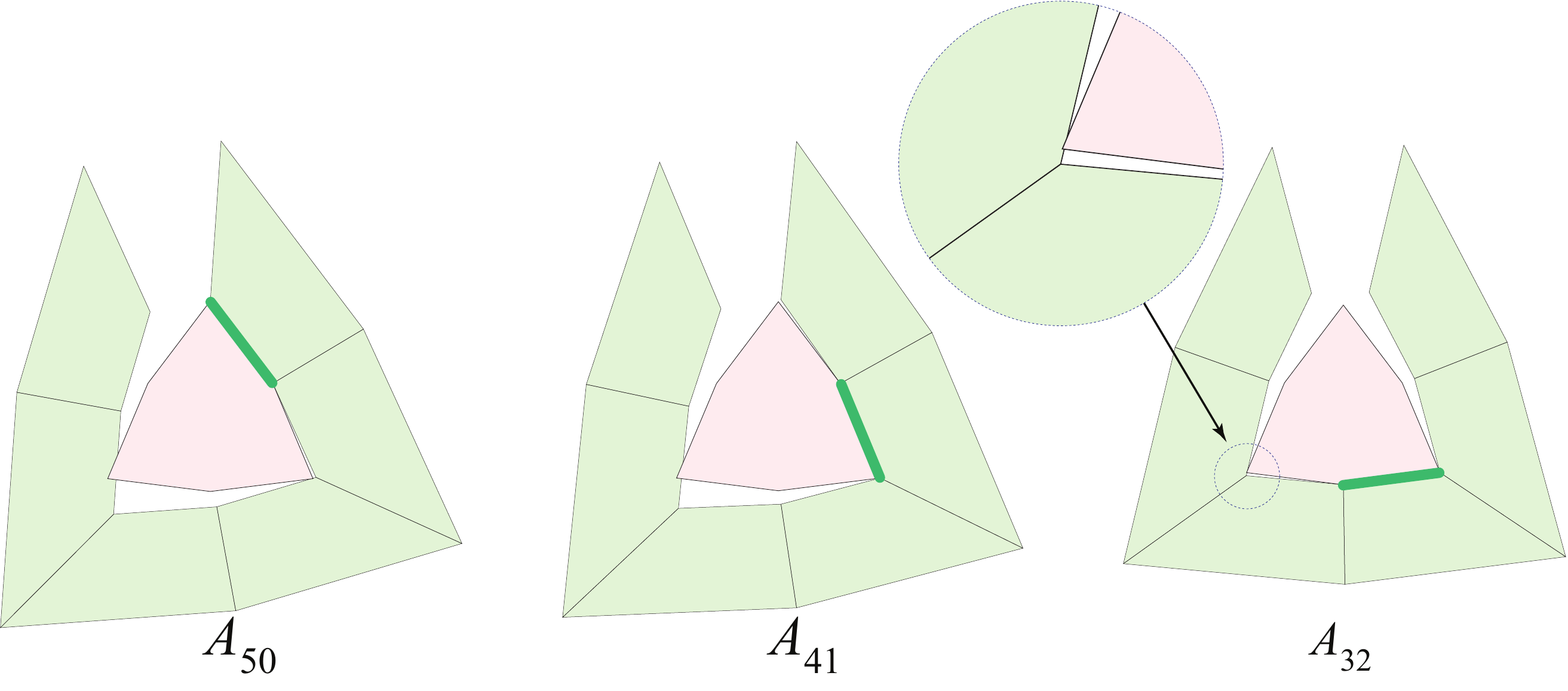}
\caption{Apex cuts: each leads to overlap. The highlighted edge is not cut.}
\figlab{ApexCuts}
\end{figure}
\begin{figure}[htbp]
\centering
\includegraphics[width=\linewidth]{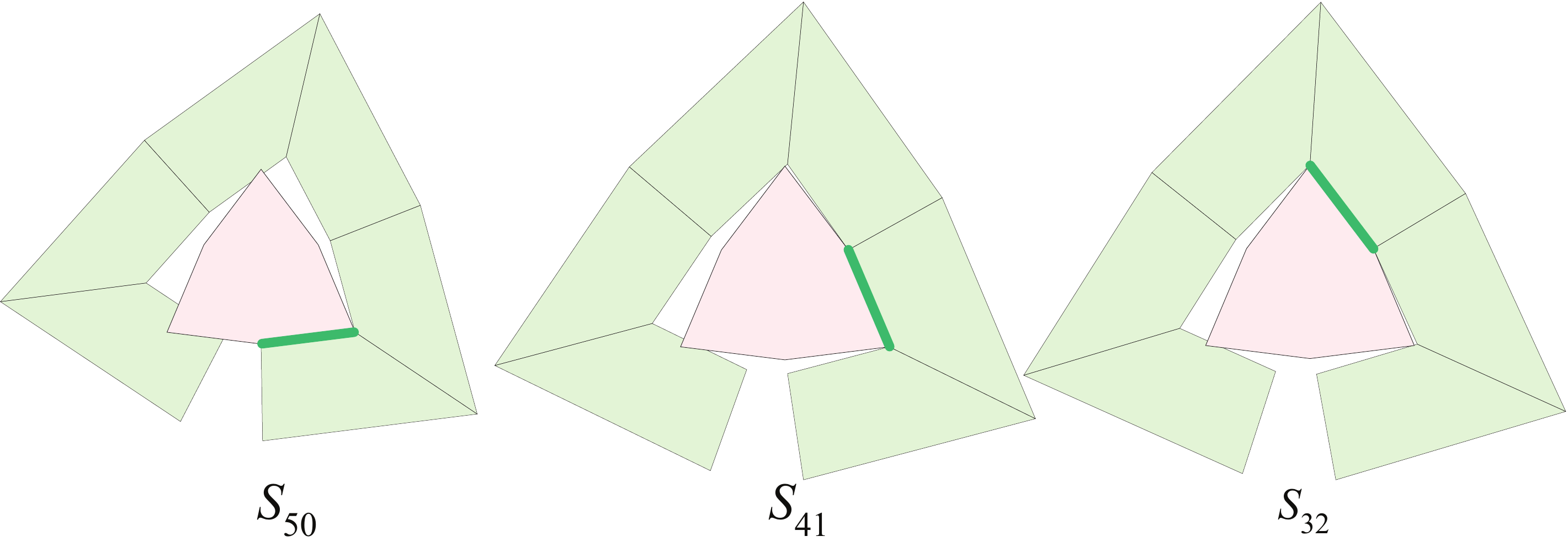}
\caption{Side cuts: each leads to overlap.}
\figlab{SideCuts}
\end{figure}
\begin{figure}[htbp]
\centering
\includegraphics[width=\linewidth]{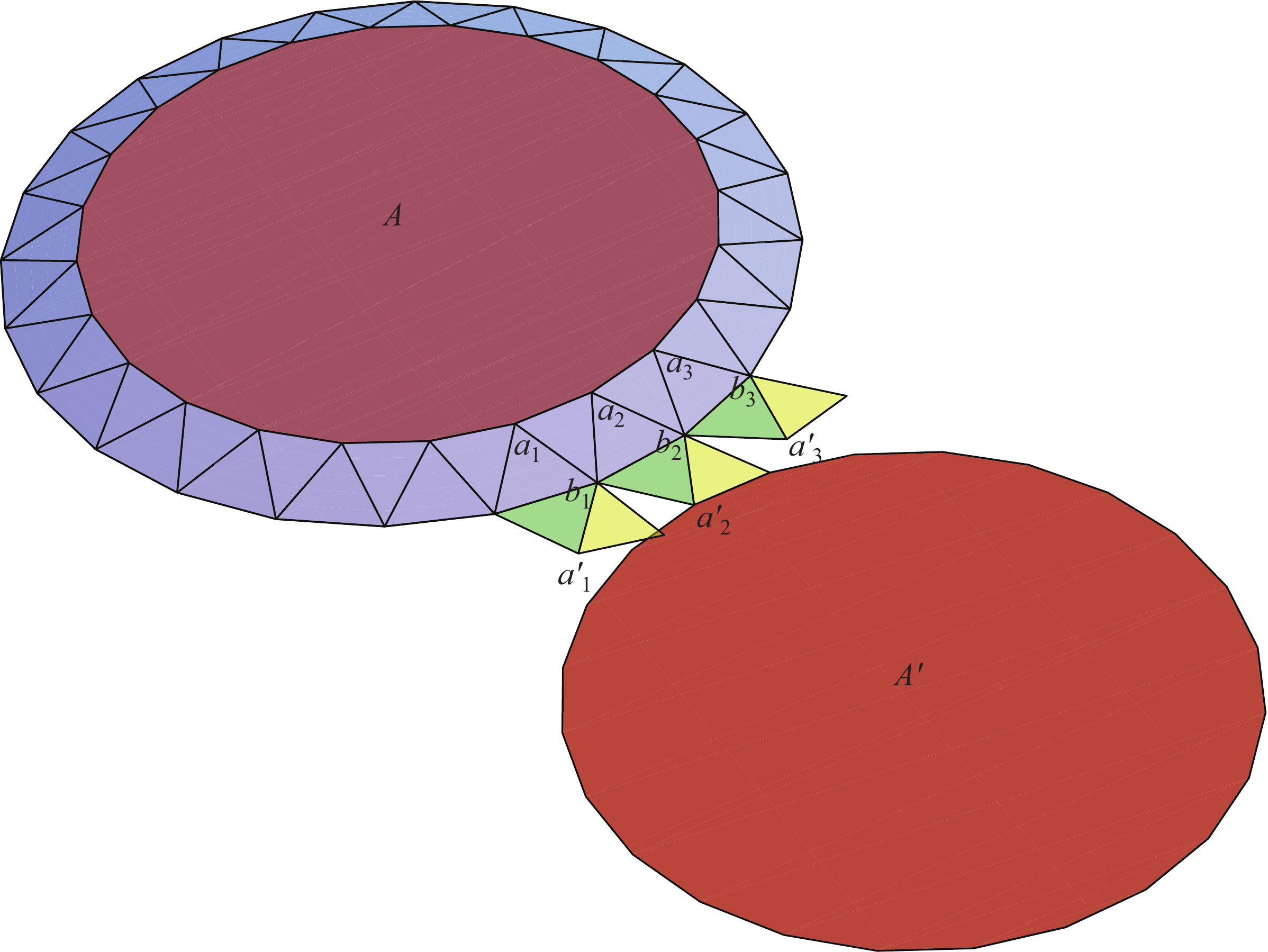}
\caption{A drum-like prismatoid that results in overlap with consistent ccw rotation
of the (yellow) $A$-triangles.  Here the point $a'_1$ overlaps the unfolded top $A'$.
This overlap can be removed easily, by rotating the $A$-triangle $\triangle a_1 a_2 b_1$ cw
rather than ccw.}
\figlab{DrumOverlap}
\end{figure}
\begin{figure}[htbp]
\centering
\includegraphics[width=0.75\linewidth]{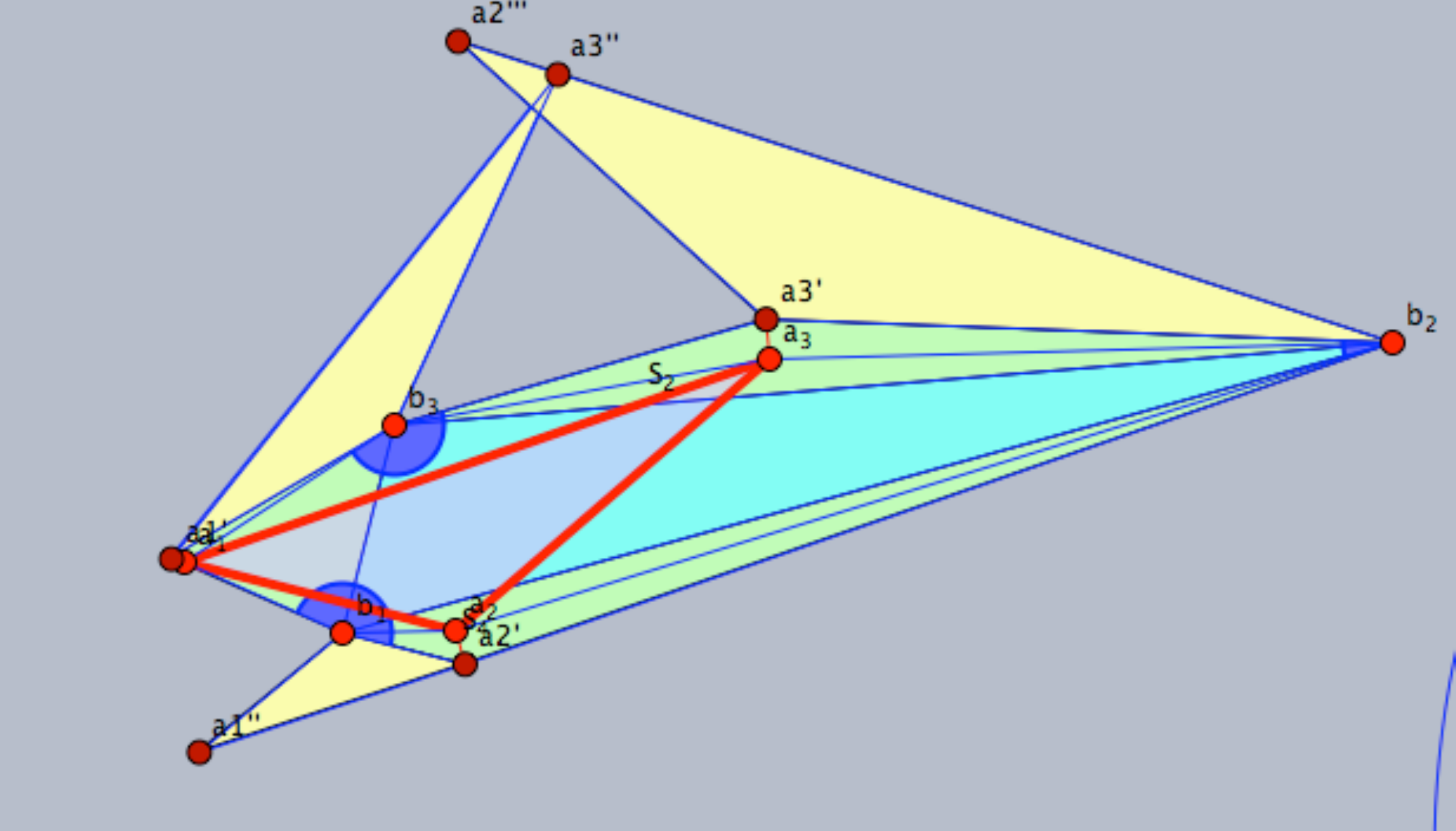}
\caption{An overhead view of a nearly flat, topless triangular prismatoid.
$A$-triangles $\triangle a_2 a_3 b_2$ and $\triangle a_3 a_1 b_3$ are both
rotated ccw, about $b_2$ and $b_3$ respectively.
[Figure created in Cinderella.]
}
\figlab{WingsCrossCcw}
\end{figure}

\subsection{Proof of Lemma~\protect\lemref{AltitudeRays}}
\seclab{ProofAltitudeRays}
\setcounter{theorem}{0}
\begin{lemma}
No pair of altitude rays cross in the base plane, and so they define a partition of that plane exterior
to the base unfolding.
\end{lemma}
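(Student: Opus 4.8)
\emph{Plan.} The plan is to reduce everything to the single geometric fact that each altitude ray points along the outward normal of an edge of the convex base $B$, and then to exploit the cyclic monotonicity of those normals.

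\emph{Directions and half-planes.} For each $i$ let $\ell_i$ be the line through the base edge $b_i b_{i+1}$, let $u_i$ be the outward unit normal of that edge (pointing away from $\mathrm{int}\,B$), and let $H_i$ be the open side of $\ell_i$ disjoint from $\mathrm{int}\,B$. Unfolding the base triangle $B_i$ across $b_i b_{i+1}$ places its apex, call it $p_i$, strictly in $H_i$, and the altitude of $B_i$ from $p_i$ is by definition perpendicular to $b_i b_{i+1}$; hence $r_i$ leaves $p_i$ in the direction $u_i$ and $r_i \subset H_i$. Since $B$ is convex with no two collinear edges, the $u_i$ are pairwise distinct and, as $i$ runs around $B$, turn monotonically around the circle, completing one full revolution; correspondingly the apexes $p_1,\dots,p_n$ occur in this same cyclic order on $\partial(BU)$, interleaved with $b_1,\dots,b_n$.

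\emph{No two rays cross.} Fix $i\neq k$, relabeling so that the arc of edges from $e_i$ to $e_k$ carries total turning at most $\pi$. If $u_i=-u_k$, convexity of $B$ places $B$ strictly in the slab between $\ell_i$ and $\ell_k$, so $H_i\cap H_k=\varnothing$ and we are done. Otherwise $r_i\cap r_k\neq\varnothing$ is equivalent to $p_k-p_i\in\mathrm{cone}(u_i,-u_k)$, a proper planar cone; writing $d_i$ for the direction of $b_i b_{i+1}$, membership in this cone forces both $\langle p_k-p_i, d_i\rangle\le 0$ and $\langle p_k-p_i, d_k\rangle\le 0$. For adjacent edges ($k=i+1$) the cone has aperture exactly $\b_{i+1}=\angle b_i b_{i+1} b_{i+2}$, and one refutes membership by examining the flat picture around $b_{i+1}$: in $BU$ the angles $\angle(B_i \text{ at } b_{i+1})$, $\b_{i+1}$, $\angle(B_{i+1}\text{ at }b_{i+1})$ and the leftover gap — which is the curvature $\kap_{b_{i+1}}$ together with the as-yet-unplaced total angle of the $A$-fan at $b_{i+1}$ — sum to $2\pi$, and this strictly positive gap is what keeps $p_i$ and $p_{i+1}$ ``in order'' along $\ell_i$ and along $\ell_{i+1}$, i.e.\ keeps $p_{i+1}-p_i$ outside the cone. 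A general pair is then reduced to the adjacent case: if $r_i$ and $r_k$ met, then, using $r_m\subset H_m$ for all $m$ together with the cyclic ordering of the $p_m$ on $\partial(BU)$, a Jordan-curve / winding-number argument (close up the sub-ray of $r_i$, an arc of a huge enclosing circle, and the sub-ray of $r_k$) forces some consecutive pair $r_m, r_{m+1}$ to meet as well, a contradiction.

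\emph{Partition, and the main obstacle.} With no crossings, the $n$ rays issue from the cyclically ordered boundary points $p_i$ of $BU$ and escape to infinity in the cyclically ordered directions $u_i$, hence cut the exterior of $BU$ into exactly $n$ unbounded regions, with $R_i$ (the one incident to $b_i$) bounded by $r_{i-1}$ and $r_i$; this is the asserted partition. I expect the adjacent case to be the crux, and within it the scenario where $B_i$ or $B_{i+1}$ has an \emph{obtuse} angle at the shared vertex $b_{i+1}$ and therefore ``leans over'' it: the foot of its altitude then lies on the extension of the base edge rather than on the edge itself, so the naive claim that the two apexes sit in order along the base edges can fail, and one must charge the discrepancy against the \emph{full} angular deficit at $b_{i+1}$ (the curvature together with the $A$-fan angle), not the curvature alone.
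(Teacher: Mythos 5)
Your overall architecture is sound (rays point along the outward edge normals, reduce a general pair to an adjacent pair, count the regions), and your reduction to the adjacent case differs from the paper's: you propose a Jordan-curve argument showing that a crossing of nonadjacent rays forces a crossing of some consecutive pair, whereas the paper extends the two base edges until they meet at a point $\overline{b}$, forms the auxiliary prismatoid $\overline{\P}=\mathrm{hull}(\P\cup\overline{b})$, observes the altitudes are unchanged, and reapplies the adjacent-pair argument. Your topological reduction is plausible but has its own loose end (you need that an intermediate ray $r_m$ cannot exit the bounded region through the arc of $\partial(BU)$, i.e.\ cannot re-enter the base unfolding), while the paper's reduction sidesteps this.

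The genuine gap is exactly where you flag it: the adjacent pair with one angle at the shared vertex acute and the other obtuse. Your refutation of cone membership rests entirely on intrinsic data at $b_{i+1}$ --- the two triangle angles, $\beta_{i+1}$, and the positive leftover gap summing to $2\pi$ --- but that data does not determine the positions of the two apexes along the rays' baselines, because the edge lengths $|b_{i+1}a'_i|$ and $|b_{i+1}a'_{i+1}|$ are free, and in the obtuse case one altitude foot lies beyond $b_{i+1}$ on the extension of its base edge, so one apex can in principle sit far enough ``past'' the vertex for the rays to meet. The paper's proof of precisely this case is not intrinsic: it passes to three dimensions, notes that the vertical plane $H_2$ through the altitude of $B_2$ contains both $a'_2$ and $a_2$, and uses the convexity of $\P$ --- the supporting planes of $B_1$ and $B_2$ cut the $A$-plane in lines parallel to the base edges and $A$ must lie in the corresponding halfplanes --- to force $a_1$ onto the side of $H_2$ incompatible with a crossing. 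That constraint (both apexes lie on the single convex top $A$, supported by both planes) is exactly the information your angle-sum bookkeeping discards, and ``charging the discrepancy against the full angular deficit'' does not recover it. As written, the central case of the lemma is asserted, not proved.
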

\begin{proof}
Consider three consecutive $B$ vertices of the prismatoid $\P$, $(b_1,b_2,b_3)$
supporting two base triangles, $B_1=\triangle b_1 b_2 a_1$ and $B_2=\triangle b_2 b_3 a_2$.
We will show that $r_1$ and $r_2$ cannot cross.
Let $\b_1 = \angle b_1 b_2 a_1$ and $\b_2 = \angle b_3 b_2 a_2$ be the two angles
of the base triangles incident to $b_2$.
(We use $a_2$ for the apex of $B_2$ for simplicity, although there could be intervening $A$ vertices between $a_1$ and $a_2$.)
We consider three cases, distinguishing acute and obtuse $\b_i$ angles.
\begin{figure}[htbp]
\centering
\includegraphics[width=\linewidth]{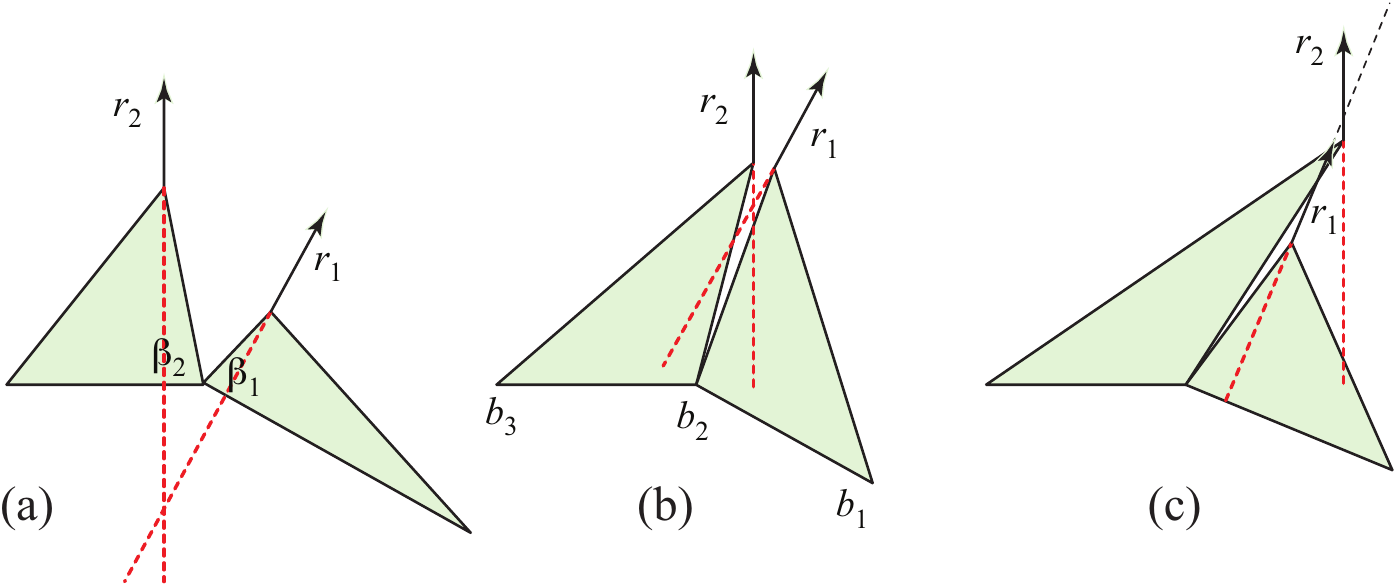}
\caption{Only in case~(c) could ray $r_1$ cross $r_2$.}
\figlab{ObtuseAcuteRays}
\end{figure}

If both $\b_1$ and $\b_2$ are acute, then the altitudes of $B_1$ and $B_2$ lie on the
base edges $b_1 b_2$ and $b_2 b_3$ respectively, and the lines containing the rays
cross behind the rays, as in Fig.~\figref{ObtuseAcuteRays}(a).
Similarly, if both $\b_1$ and $\b_2$ are obtuse, again the ray lines cross behind
the rays, this time exterior to $B$, as in~(b) of the figure.
Only when one angle is obtuse and the other acute could the rays possibly
cross.  Without loss of generality, let $\b_2$ be obtuse and $\b_1$ acute,
as in~(c) of the figure.  We now concentrate on this case.

Let $H_i$ be the vertical plane containing the altitude of $B'_i$.
This plane 
includes both 
the unfolded $a'_i$ on the $B$-plane and the vertex $a_i$ on the $A$-plane,
because $a'_i$ is the image of $a_i$ rotated about the base edge $b_i b_{i+1}$
to which the altitude of $B_i$ is perpendicular.
See Fig.~\figref{AltitudeRays3D}.
\begin{figure}[htbp]
\centering
\includegraphics[width=\linewidth]{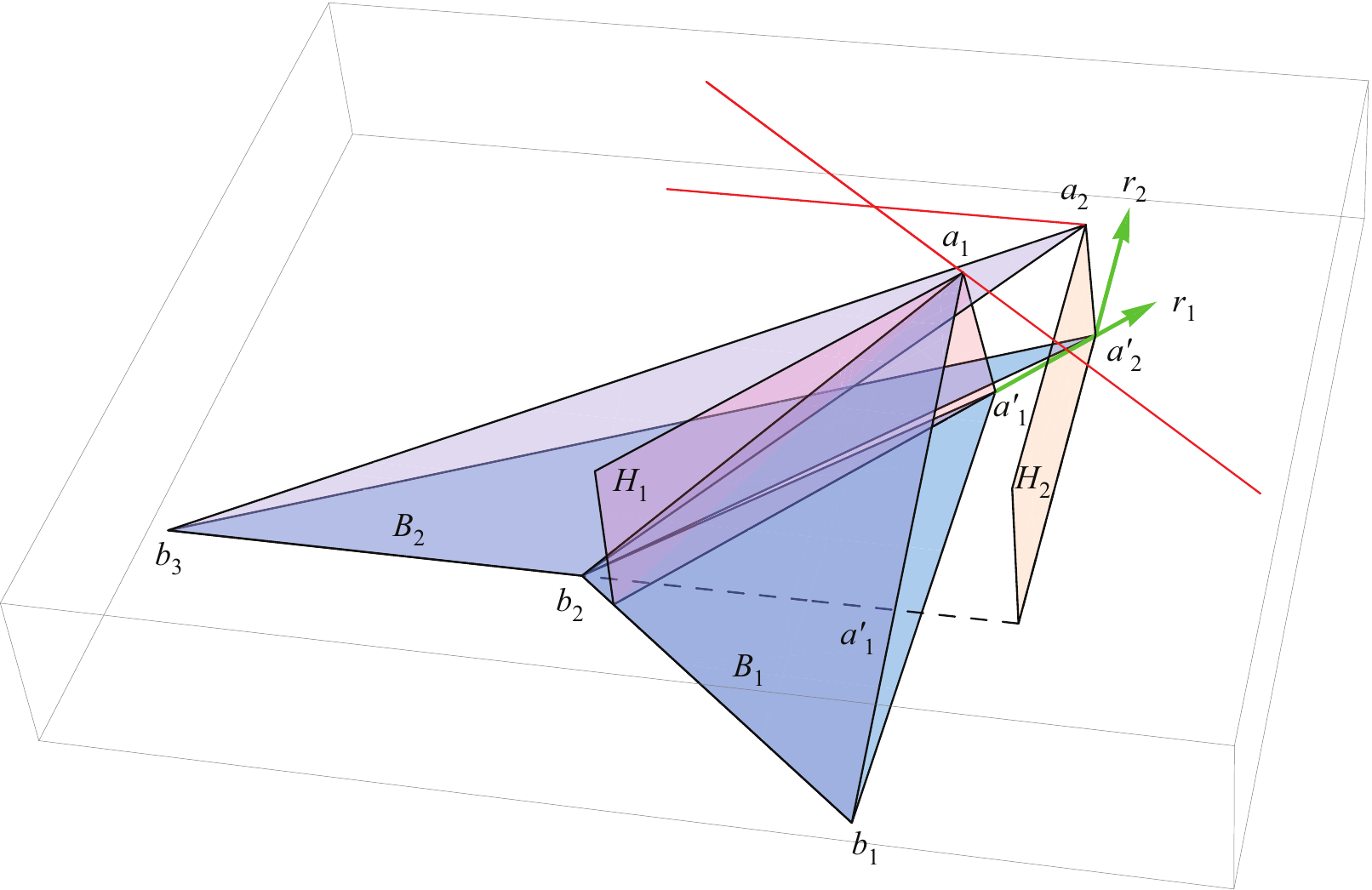}
\caption{The conditions of this case violate the convexity of $\P$: 
$a_1$ must be right of $H_2$ so that $a_2$ is inside the plane determined by $B_1$.}
\figlab{AltitudeRays3D}
\end{figure}
The $B_i$ triangles of $\P$ cut the $A$-plane in lines parallel
to their base edges $b_i b_{i+1}$, and the top $A$ must fall inside the halfplanes 
on the $A$-plane bounded by these lines.
Examination of the figure shows that this requires $a_1$ to lie on
the $A$-plane right of $H_2$ in the figure.
But $a'_1$ is necessarily initially left of $H_2$ if $r_1$ is to cross $r_2$,
and the rotation of $a'_1$ from the $B$-plane up to the $A$-plane moves it
only further left of $H_2$.
Thus this last case violates the convexity of $\P$,
and we have established the lemma for adjacent altitude rays $r_1, r_2$.

(We have shown in the figure $B_1$ and $B_2$ both making an angle less than $\pi/2$ with the base plane,
but the argument is not altered if either of those angles exceed $\pi/2$: still the rotation of $a_i$ 
down to $a'_i$ occurs in the altitude $H_i$ plane.)

Now consider nonadjacent rays, say $r_1$ and $r_i$, based on base triangles
$B_1$ and $B_i$.  
Extend the edges of those triangles in the $B$-plane until they meet at point $\overline{b}$,
and form new triangles $\overline{B_1}=\triangle b_1 \overline{b} a_1$ 
and $\overline{B_i}=\triangle \overline{b} b_{i+1} a_i$ sharing $\overline{b}$.
(Again we use $a_i$ for the apex of $B_i$ without implying there are exactly $i-1$ $A$-vertices between $a_1$ and $a_i$.)
Notice these triangles are still apexed at $a_1$ and $a_i$ respectively, as the planes containing
$B_1$ and $B_i$ support $A$ at these two points.
Define $\overline{\P}$ as the convex hull of $\P \cup \overline{b}$.
In $\overline{\P}$, the altitudes of the new base triangles 
$\overline{B_1}$ and $\overline{B_i}$ are exactly the same as the altitudes
of the original $B_1$ and $B_i$, because their base edges have been extended while
retaining their apexes on $A$.
So the rays $r_1$ and $r_i$ have not changed in the base plane, and we can reapply the argument for adjacent rays.
\end{proof}

\subsection{Proof of Lemma~\protect\lemref{zto0}}

\begin{lemma}
Let $\P(z)$ be a prismatoid with height $z$.
Then the combinatorial structure of $\P(z)$ is independent of $z$,
i.e., raising or lowering $A$ above $B$ retains the convex hull structure.
\end{lemma}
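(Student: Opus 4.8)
The plan is to show that the face lattice of $\P(z)=\mathrm{conv}(A(z)\cup B)$ is constant in $z>0$ by tracking which triples (or larger subsets) of vertices span facets, and arguing that the defining inequalities never degenerate or reverse as $z$ varies. Fix coordinates so that $B$ lies in the plane $\{z=0\}$ and $A(z)$ is the rigid translate of $A$ into the plane $\{z=\zeta\}$ for $\zeta>0$; write $A(z)$'s vertices as $a_i(\zeta)=(\hat a_i,\zeta)$ with $\hat a_i$ fixed in the $xy$-plane. First I would classify the candidate facets into three types: (i) the base $B$ itself, (ii) the top $A(z)$, and (iii) lateral facets, each of which is the convex hull of an edge (or vertex) of one polygon with an edge (or vertex) of the other. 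Types (i) and (ii) are facets for every $\zeta>0$ trivially. So the content is entirely in the lateral facets.

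Next I would set up, for a candidate lateral facet $F$ supported by vertices from $A$ and from $B$, the outward normal $n(\zeta)$ and the linear functional $\ell_\zeta(x)=n(\zeta)\cdot x$. Because $B$ is fixed and $A(z)$ moves only vertically and rigidly, the supporting hyperplane of any fixed combinatorial lateral facet depends on $\zeta$ in a controlled, even projectively-linear, way: the $xy$-projection of the "crease line" where the facet's plane meets the two parallel planes is independent of $\zeta$ (this is exactly the mechanism already invoked in Lemma~\lemref{AltitudeRaysUnf} for base triangles and their altitude lines). I would then show: (a) a subset $S$ of vertices that is affinely independent and spans a supporting hyperplane at one value $\zeta_0>0$ continues to span a supporting hyperplane for all $\zeta>0$ — i.e. no vertex of $\P$ crosses to the wrong side — and (b) no new coplanarities appear, so the facet does not split or merge. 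For (a), the key computation is that the signed distance of any other vertex $v$ from the hyperplane of $F$, as a function of $\zeta$, is either identically zero (forced by a rigid geometric incidence) or strictly monotone with a sign fixed by the convexity/supporting condition; monotonicity plus correct sign at $\zeta_0$ gives correct sign for all $\zeta>0$. For (b), a new coplanarity would be a sign change reaching zero, contradicting (a); and the limiting cases $\zeta\to0^+$ and $\zeta\to\infty$ only give degeneracies in the closure, not on $(0,\infty)$.

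The main obstacle I anticipate is part (a) in full generality — verifying that the distance-from-hyperplane functions are genuinely monotone (or constant) in $\zeta$ rather than merely sign-stable, since a lateral facet's normal $n(\zeta)$ itself rotates with $\zeta$, so $\ell_\zeta(v)$ is a ratio of affine functions of $\zeta$ rather than an affine function. I would handle this by reparametrizing: instead of distance, compare orientations of the $4\times4$ determinant $\det[\,a_i;a_j;b_p;v\,]$ (homogeneous coordinates), which is a polynomial in $\zeta$; because three of the four points lie in two fixed parallel planes, this determinant is \emph{linear} in $\zeta$, so it has at most one root, and its sign on $(0,\infty)$ is determined by its value at a single $\zeta_0$ together with the sign of its slope. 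That reduces everything to: the orientation predicate of each relevant $4$-tuple is an affine function of $\zeta$ with no root in $(0,\infty)$. Establishing that last claim — that every genuine (non-degenerate) orientation is bounded away from flipping on the whole open ray — is the crux, and I would argue it by noting that a flip at some $\zeta_1\in(0,\infty)$ would make $\P(\zeta_1)$ non-simplicial/degenerate in a way that, by the same linearity, would have to persist or force a second flip, contradicting convexity of $\mathrm{conv}(A(z)\cup B)$ for a polygon $A$ strictly above a polygon $B$. The remaining steps — assembling the per-facet statements into "the face lattice is constant" and noting that edges and vertices of $\P(z)$ are intersections of facets and hence also combinatorially constant — are routine.
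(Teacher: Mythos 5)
Your route is genuinely different from the paper's. The paper's proof is a three-line support-plane argument: the plane of a candidate lateral face $\triangle b_1 b_2 a$ meets the $A$-plane in the line $L(z)$ through $a(z)$ parallel to $b_1 b_2$; since $A$ only translates vertically, whether $L(z)$ supports $A$ at $a$ (and hence whether the face's plane supports the whole hull) is a statement about the fixed $xy$-projections and so is independent of $z$. Your orientation-determinant formulation is heavier, but it is a legitimate way to prove constancy of the entire face lattice in one stroke, and your key structural observation --- that each $4\times 4$ orientation determinant is affine in $\zeta$ because the vertices live in two parallel planes --- is correct.

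However, the step you yourself flag as the crux is where the argument fails as written. You propose to exclude a root of the affine determinant on $(0,\infty)$ by claiming a sign flip at some $\zeta_1$ ``would have to persist or force a second flip, contradicting convexity of $\mathrm{conv}(A(z)\cup B)$.'' Neither half of that works: an affine function that changes sign does so exactly once and is under no obligation to flip again, and a momentary coplanarity of four vertices does not contradict convexity --- a convex hull is always convex, and polytopes with coplanar vertex quadruples are perfectly good convex polytopes. So the crux is unproved as stated. The repair is immediate within your own setup and you should make it explicit: every mixed $4$-tuple (some vertices at height $0$, some at height $\zeta$) becomes coplanar at $\zeta=0$, so its orientation determinant, being affine in $\zeta$, vanishes at $\zeta=0$ and therefore equals $c\zeta$ for a constant $c$. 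Hence its sign is constant on all of $(0,\infty)$, or it is identically zero; tuples drawn entirely from $A$ or entirely from $B$ are identically degenerate and irrelevant. With that one observation your parts (a) and (b) both follow and the lemma is proved. The repaired proof buys uniformity --- one algebraic predicate covers all facet types and degeneracies --- at the cost of hiding the simple geometric reason the paper exploits, namely that the support condition for a lateral face lives entirely in the $xy$-projection.
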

\begin{proof}
Let $B_1=\triangle b_1 b_2 a(z)$ be a $B$-triangle for some $z>0$.  
(The argument is the same for an $A$-triangle by inverting $\P$.)
Let $L(z)$ be the line in the $A$-plane parallel to $b_1 b_2$ through $a(z)$,
i.e., $L(z)$ is the intersection of the plane containing $B_1$ with the $A$-plane.
Then $L(z)$ is a line of support for $A(z)$ in the $A$-plane.
As $z$ varies, this line remains parallel to $b_1 b_2$, and because $A(z)$ merely translates with $z$
(it does not rotate), $L(z)$ remains a line of support to $A(z)$.  Thus the plane containing $B_1(z)$
supports $A(z)$, and of course it supports $B$ because $b_1 b_2$ does not move.
Therefore, $B_1(z)$ remains a face of $\P(z)$ for all $z>0$.
\end{proof}

\subsection{Proof of Lemma~\protect\lemref{AltitudeRaysUnf}}
\seclab{AltitudeRaysUnf}
\begin{lemma}
Let $\P(z)$ be a prismatoid with height $z$, and $BU(z)$ its base unfolding.
Then the apex $a'_j(z)$ of each $B'_i(z)$ triangle $\triangle b_i b_{i+1} a'_j(z)$
in $BU(z)$ lies on the
fixed line containing the altitude of $B'_i(z)$.
\end{lemma}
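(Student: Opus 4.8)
The plan is to track a single base triangle $B_i$ as the height varies and show that the unfolding operation is "compatible" with the vertical motion of $A$. Fix the base edge $e = b_i b_{i+1}$, which does not move with $z$. By Lemma~\lemref{zto0}, $B_i(z) = \triangle b_i b_{i+1} a_j(z)$ is a face of $\P(z)$ for every $z$, so the unfolded triangle $B'_i(z)$ is obtained by rotating $B_i(z)$ about the line containing $e$ until it lies flat in the base plane. The key observation is that the foot of the altitude of $B_i(z)$ from $a_j(z)$ to the line containing $e$ is independent of $z$: this is just the orthogonal projection of $a_j(z)$ onto that line, and since $a_j(z)$ moves only vertically (the $A$-plane merely translates in $z$, as noted in the proof of Lemma~\lemref{zto0}), its horizontal coordinates are fixed, hence its projection onto the fixed horizontal line containing $e$ is a fixed point $f_i$.

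First I would set up coordinates so that the line containing $e$ is an axis in the base plane and $f_i$ is the origin of a perpendicular coordinate in that plane; then the altitude ray $r_i$ of $B'_i(z)$ is, by definition, the ray from $f_i$ perpendicular to $e$ into the exterior of $BU$. Second, I would compute the position of $a'_j(z)$: rotating $B_i(z)$ about the line of $e$ into the base plane sends $a_j(z)$ to the point at distance $|f_i a_j(z)|$ from $f_i$ along the perpendicular to $e$. Since the entire segment $f_i a_j(z)$ is perpendicular to $e$ in space (it is the altitude of the triangle, and $e \perp f_i a_j(z)$ because $f_i$ is the foot of the perpendicular), the rotation carries this segment onto the perpendicular-to-$e$ ray through $f_i$ in the base plane — which is exactly the line containing the altitude of $B'_i(z)$, i.e.\ $r_i$. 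Therefore $a'_j(z)$ lies on that fixed line for all $z$, and as $z$ grows the altitude length $|f_i a_j(z)| = \sqrt{|f_i a_j(0)|^2 + z^2}$ increases, so $a'_j(z)$ "rides out" along $r_i$ as claimed in the text.

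The only subtlety — and the step I would be most careful about — is confirming that $f_i$, the foot of the altitude, really is the \emph{same} point for all $z$, including the possibility that $f_i$ lies outside the segment $e$ (when $\b_i$ is obtuse): this is where the argument must use that $A$ translates rather than rotates, so that $a_j(z)$ has fixed horizontal part. Once that is granted, everything else is the elementary fact that rotating a right triangle about one leg keeps the other leg (the altitude) perpendicular to that leg, so the rotated altitude lands on the fixed perpendicular line. No genuine obstacle remains; the lemma is essentially a bookkeeping consequence of Lemma~\lemref{zto0} together with the definition of the altitude ray.
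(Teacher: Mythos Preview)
Your argument is correct and follows essentially the same idea as the paper's proof: the unfolding is a rotation about the fixed edge $b_i b_{i+1}$, and the apex, lying on the altitude (perpendicular to that edge), is carried to the perpendicular line through the same foot in the base plane. Your version is more explicit than the paper's on the one point that actually needs care---namely that the foot $f_i$ of the altitude is independent of $z$ because $a_j(z)$ moves purely vertically---whereas the paper leaves this implicit; otherwise the two proofs coincide.
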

\begin{proof}
Recall that $B'_i$ is produced by rotating $B_i$ about its base edge $b_i b_{i+1}$.
Thus every point on a line perpendicular to $b_i b_{i+1}$ lying within the plane
of $B_i$ unfolds to that line rotated to the base plane.
Because $a_j(z)$ lies on such a line containing $B_i$'s altitude, $a'_j(z)$ 
is on the line containing the altitude to $B'_i$.
\end{proof}

\begin{figure}[htbp]
\centering
\includegraphics[width=\linewidth]{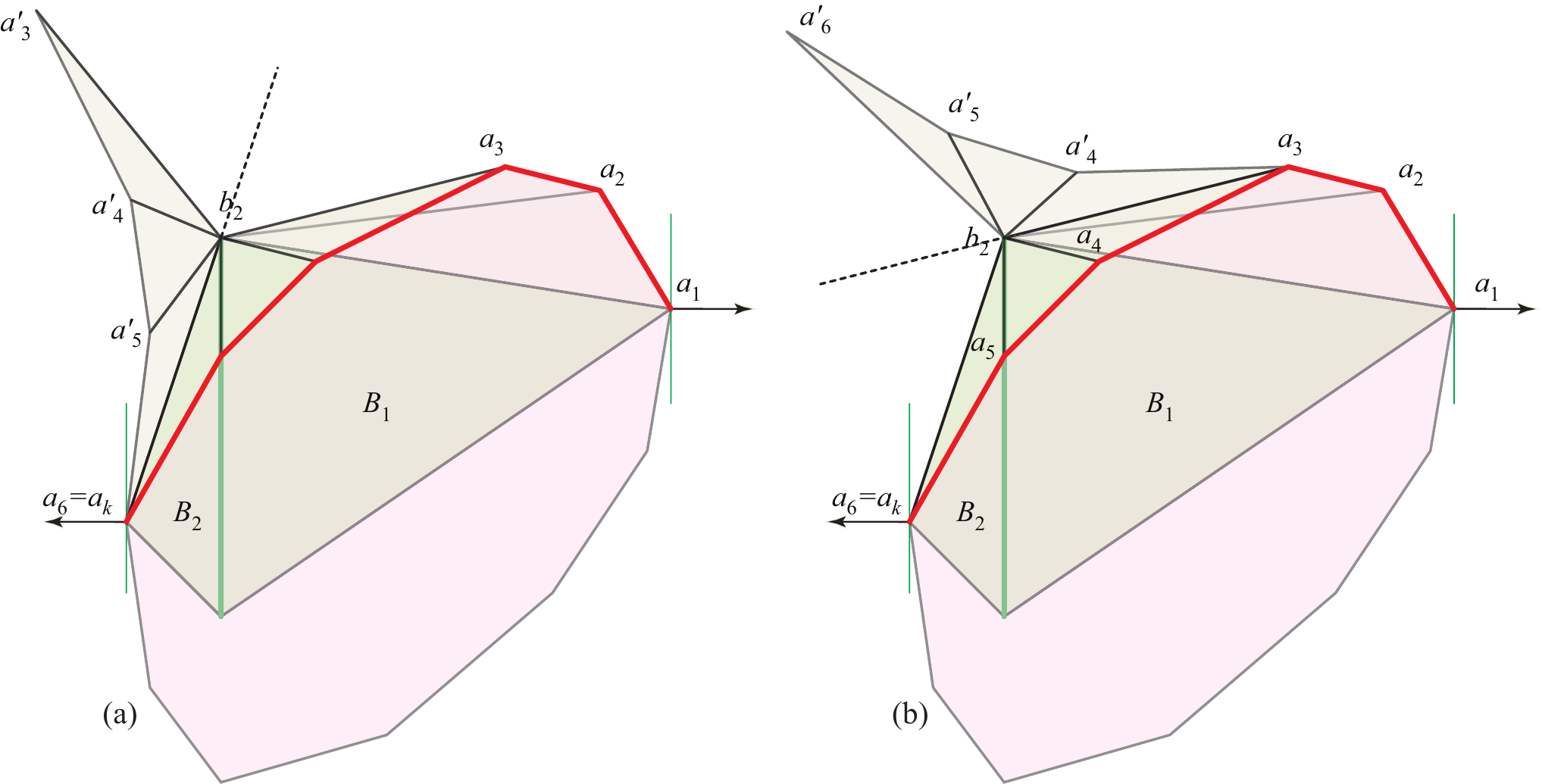}
\caption{Case~2 gone bad: the chain $(a'_4, a'_5, a'_6)$ leaves $R$ as it
crosses $r_1$.  The overlap in Fig.~\protect\figref{WingsCrossCcw} can also be 
understood as 
caused by an unsafe flip.}
\figlab{FlatFlippingCase2Bad}
\end{figure}

\subsection{Proof of Lemma~\protect\lemref{Flip}}
\seclab{Flip}

\begin{lemma}
Let $b_2$ have tangents $a_s$ and $a_t$ to $A$.
Then either reflecting the enclosed up-faces across the left tangent,
or across the right tangent, is ``safe" in the sense that no points of
a flipped triangle falls outside
the rays $r_1$ or $r_k$.
\end{lemma}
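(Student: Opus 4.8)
I would argue in the base plane of the flat prismatoid $\P(0)$, tracking directions from $b_2$. Write $\ell_s=\overline{b_2a_s}$ and $\ell_t=\overline{b_2a_t}$ for the two tangents from $b_2$ to $A$, touching the $a$-chain at the vertices where the reflex (up-face) portion meets the two convex (down-face) portions, and put $\phi:=\angle a_sb_2a_t$. Two preliminary observations set things up. First, the enclosed up-faces form a sub-fan whose apex angles at $b_2$ sum to exactly $\phi$: traversing the far arc of $A$ from $a_s$ to $a_t$, the ray from $b_2$ sweeps monotonically through precisely the wedge between $\ell_s$ and $\ell_t$. Hence reflecting that sub-fan rigidly across $\ell_s$ produces a congruent sub-fan occupying the width-$\phi$ wedge on the far side of $\ell_s$ (the side toward $B_1$), and reflecting it across $\ell_t$ produces one occupying the width-$\phi$ wedge on the far side of $\ell_t$ (toward $B_2$). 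Second, since the plane of $B_1=\triangle b_1b_2a_1$ supports $A$ at $a_1$, the line through $a_1$ parallel to $b_1b_2$ supports $A$ there, so $r_1$---the altitude of $B_1$ from $a_1$, perpendicular to $b_1b_2$---runs along the outward normal of $A$ at $a_1$; likewise $r_k$ runs along the outward normal of $A$ at $a_k$.

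The next step reduces ``safe'' to an angular comparison. Using that $r_1$ is the outward-normal ray of the convex polygon $A$ at the near-arc point $a_1$, together with the monotone variation of normals along $\partial A$, I would show that the reflected sub-fan lies in the wedge between $\ell_s$ and the reflected image of $\ell_t$, and that it crosses $r_1$ precisely when $r_1$, as seen from $b_2$, sweeps into that wedge---equivalently, when the outward-normal direction of $A$ at $a_1$ falls strictly inside the width-$\phi$ wedge reached by the flip. Writing $\gamma_1$ for the angle from $\ell_s$ to that normal direction (measured away from the $A$-wedge) and $\gamma_k$ symmetrically, the flip across $\ell_s$ is safe exactly when $\phi\le\gamma_1$ and the flip across $\ell_t$ exactly when $\phi\le\gamma_k$, so the lemma reduces to $\max(\gamma_1,\gamma_k)\ge\phi$: at most one of the outward normals (at $a_1$ and at $a_k$) can dip into its forbidden wedge.

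Proving $\max(\gamma_1,\gamma_k)\ge\phi$ is, I expect, the crux. Because $r_1\perp b_1b_2$ and $r_k\perp b_2b_3$, the quantities $\gamma_1,\gamma_k$ are determined by the angles at $b_2$---in particular by the apex angles $\angle b_1b_2a_1$, $\angle a_kb_2b_3$ of $B_1,B_2$ and the base angle $\angle b_1b_2b_3$---and I would derive the inequality from the convexity of $\P$: concretely, from the non-crossing of the altitude rays in Lemma~\lemref{AltitudeRays} (which caps how far $r_1$ and $r_k$ can tilt toward one another) together with the flat-limit angle inequality $\angle b_1b_2b_3+\angle b_1b_2a_1+T+\angle a_kb_2b_3\le 2\pi$ at $b_2$, where $T$ is the total $A$-fan angle. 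An appealing alternative, and perhaps the cleanest route---it matches the hint in Lemma~\lemref{CRz} that the safe flip is toward a down-face---is to first establish that at least one of $B_1,B_2$ is a down-face and then to show that flipping toward a down-face $B$-triangle is always safe, the down-face condition being exactly what keeps the corresponding outward normal out of its forbidden wedge.

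The obstacles I anticipate: (i) the sign cases for whether $\angle b_1b_2a_1$ and $\angle a_kb_2b_3$ are acute or obtuse (precisely the dichotomy in the proof of Lemma~\lemref{AltitudeRays}): an obtuse apex angle on one side makes that flip unsafe and throws all the burden onto the other side, so the symmetric estimate $\gamma_1+\gamma_k\ge 2\phi$ will in general fail and the argument must be asymmetric; (ii) the degenerate sub-cases in which a convex portion of the $a$-chain is empty ($a_1=a_s$ or $a_k=a_t$, with the flanking $B$-triangle possibly an up-face), which need separate handling; and (iii) promoting the angular statement to genuine planar containment---for this I would check that $r_1$ and $\ell_s$ diverge as one moves out from $b_2$, so that the flipped sub-fan, confined to the convex wedge between $\ell_s$ and the reflected opposite tangent, cannot re-cross $r_1$ far from $b_2$.
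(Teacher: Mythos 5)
Your setup---reducing safety to an angular condition at $b_2$ and aiming to show that at most one of the two flips can fail---is the same strategy the paper follows, but the proposal stops short of an actual proof. The crux, your inequality $\max(\gamma_1,\gamma_k)\ge\phi$, is precisely what has to be established, and you only list candidate routes (``I would derive\dots'', ``an appealing alternative\dots'') without executing any of them. The paper's argument at this point is short but concrete: in the worst case the bounding rays are the perpendiculars erected at the tangency points $a_s$ and $a_t$ themselves; if reflecting $a_s$ across the tangent $b_2a_t$ lands it on the wrong side of that perpendicular, then the perpendicular at $a_t$ must cross the segment $b_2a_s$, and because it meets the reflected direction $a_ta'_t$ at the (positive) base angle $\b$, the opposite reflection is forced to be safe. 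Nothing equivalent to that step appears in your writeup, so the lemma is not proved.

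Two of your fallback routes are also problematic. The ``cleanest route''---first show at least one of $B_1,B_2$ is a down-face, then show flipping toward a down-face is always safe---rests on a false premise: the paper explicitly records (fact~4 of Sec.~\secref{A-fan}) that all four up/down combinations for $B_1,B_2$ occur, so both can be up-faces; the down-face hypothesis in Lemma~\lemref{CRz} is a case assumption there, not a general fact, and the case where $B_1$ is an up-face is handled by a different device (flipping the up-$A$-faces together with $B_1$, Fig.~\figref{FlatFlippingCase3}). And your own obstacle~(iii) is real and left unresolved: $r_1$ emanates from $a_1$ (in general distinct from the tangency point $a_s$), not from $b_2$, so containment of the flipped fan is not a pure comparison of directions seen from $b_2$; the paper disposes of this by arguing that the rays are ``at their highest'' when $a_1=a_s$ and $a_k=a_t$, a worst-case reduction you would also need to justify.
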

\begin{proof}
The rays $r_1$ and $r_k$ are in general below and turned beyond
(ccw and cw respectively)
the tangency points $a_s$ and $a_t$, but at their ``highest" they are as
illustrated in Fig.~\figref{FlipLemma}.
If reflecting $a_s$ to $a'_s$ is not safe as illustrated, 
then the perpendicular at $a_t$ must
hit $b_2 a_s$.  Because it makes an angle $\b$ there with $a_t a'_t$, the alternate
reflection is safe. 
\begin{figure}[htbp]
\centering
\includegraphics[width=0.75\linewidth]{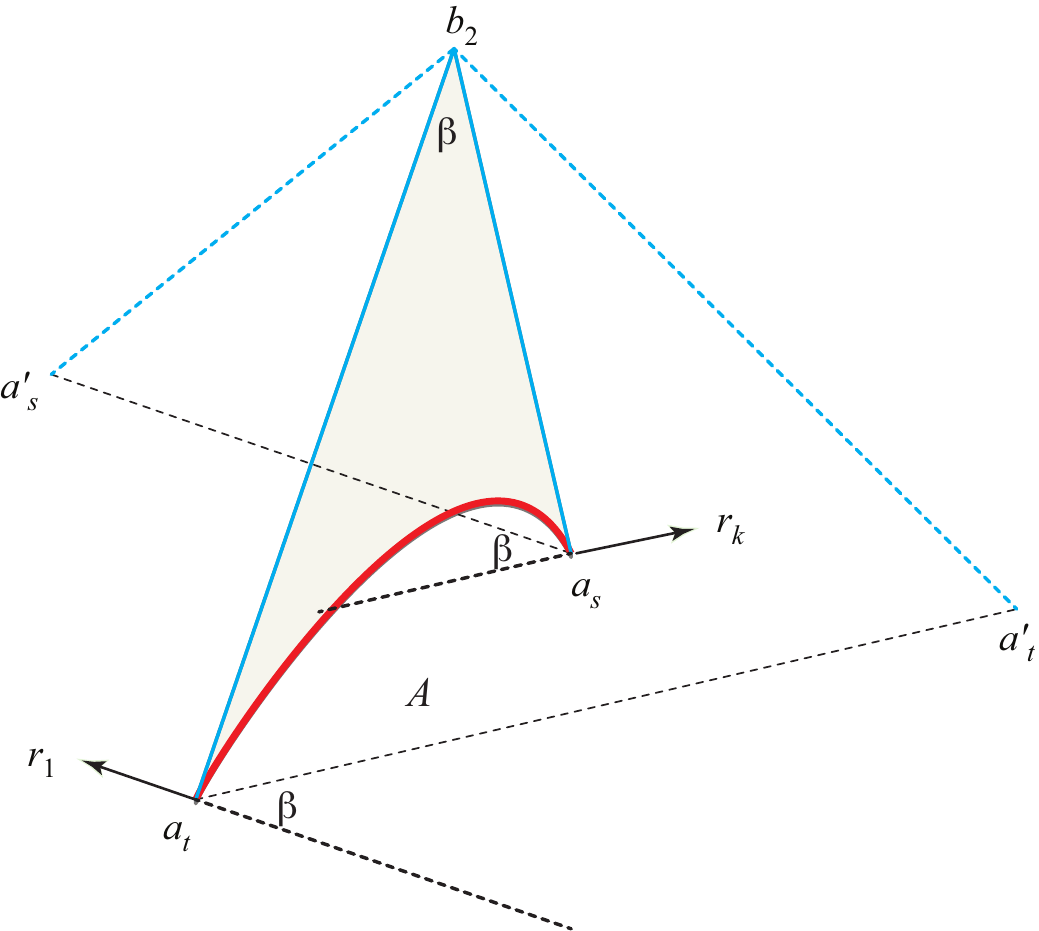}
\caption{One of the two reflections must remain above the rays $r_1$ or $r_k$.}
\figlab{FlipLemma}
\end{figure}
\end{proof}

\begin{figure}[htbp]
\centering
\includegraphics[width=\linewidth]{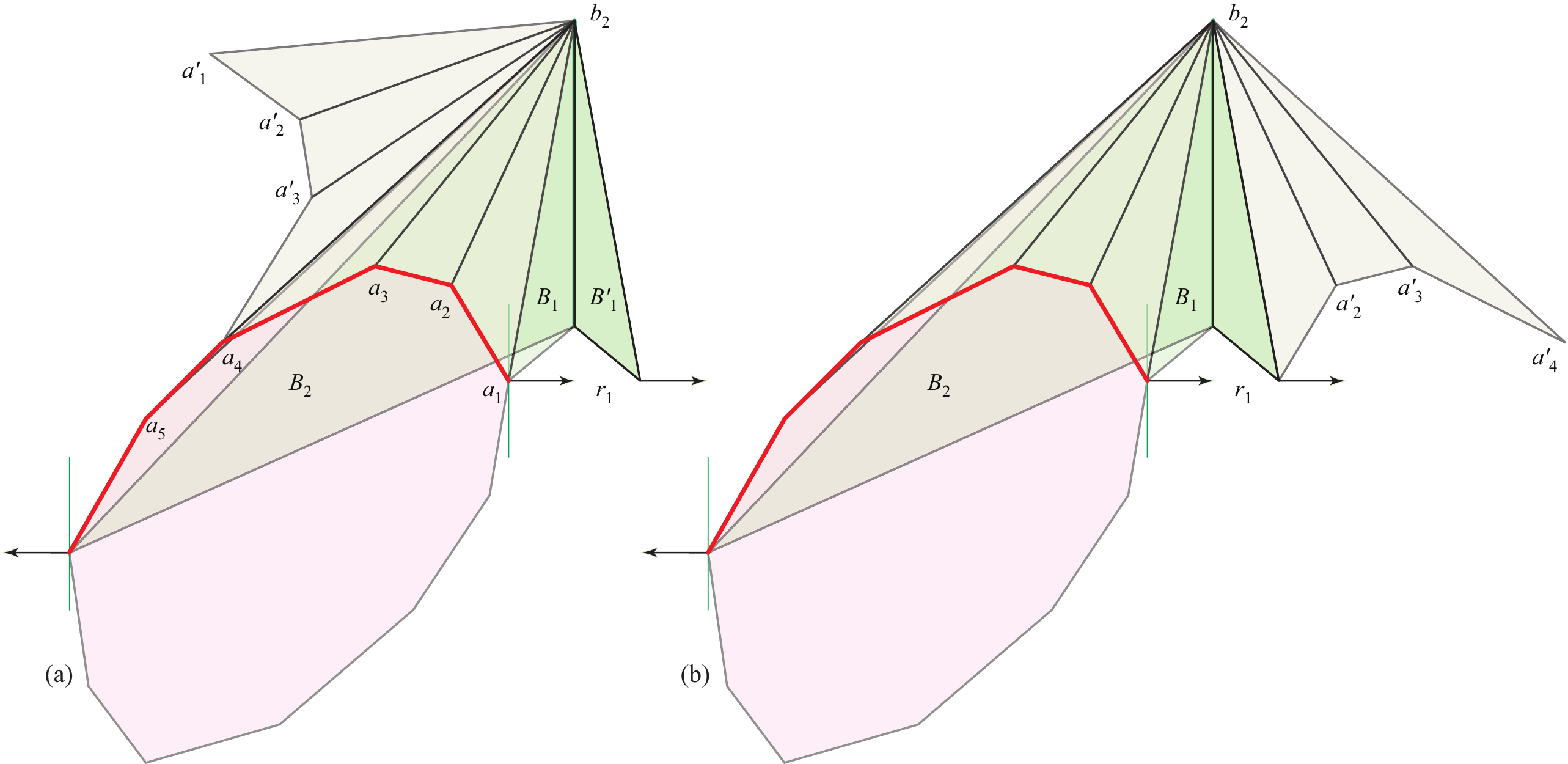}
\caption{Case~2b.  Here $B_1$ is an up-face.  (a)~Flip across the left tangent.
(b)~Rather than flip the up-$A$-faces across the right
tangent , those faces are flipped while attached to $B_1$---i.e., we treat $B_1$ as joined to those
up-$A$-faces.
}
\figlab{FlatFlippingCase3}
\end{figure}

\subsection{Proof of Lemma~\protect\lemref{AngleMonotonicity}}
\seclab{AngleMonotonicity}
\begin{lemma}
Let $\triangle b ,a_1(z) ,a_2(z)$ be an $A$-triangle, with angles
$\a_1(z)$ and $\a_2(z)$ at $a_1(z)$ and $a_2(z)$ respectively.
Then $\a_1(z)$ and $\a_2(z)$ are monotonic from their $z=0$ values
toward $\pi/2$ as $z \to \infty$.
\end{lemma}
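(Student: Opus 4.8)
The plan is to set up coordinates adapted to the single $A$-triangle $\triangle b\, a_1(z)\, a_2(z)$ and compute the two base-angles explicitly as functions of $z$, then differentiate. Place $b\in B$ at the origin of the base plane. The two $A$-vertices $a_1,a_2$ lie on the $A$-plane at height $z$; since $A(z)$ only translates vertically (Lemma~\lemref{zto0}), we may write $a_i(z) = (p_i, q_i, z)$ with $(p_i,q_i)$ fixed as $z$ varies, $i=1,2$. Then the three side lengths are $|b a_i(z)|^2 = p_i^2 + q_i^2 + z^2 =: c_i + z^2$ (constants $c_i>0$), while the opposite side $|a_1 a_2|$ is a fixed constant $d$, independent of $z$. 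So the triangle has two sides that grow like $\sqrt{c_i+z^2}$ and one fixed side $d$.

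First I would write the base angle $\a_i(z)$ at $a_i(z)$ via the law of cosines:
\[
\cos \a_1(z) \;=\; \frac{|a_1 a_2|^2 + |a_1 b|^2 - |a_2 b|^2}{2\,|a_1 a_2|\,|a_1 b|}
\;=\; \frac{d^2 + (c_1 - c_2)}{2\, d\, \sqrt{c_1 + z^2}}\,,
\]
and symmetrically for $\a_2(z)$ with $c_1 \leftrightarrow c_2$. The numerator is a constant; only the denominator moves, and it is strictly increasing in $z$ for $z>0$. Hence $\cos\a_1(z)$ is monotone in $z$ (strictly decreasing toward $0$ if the constant numerator $d^2 + c_1 - c_2 > 0$, strictly increasing toward $0$ if it is negative, and identically $0$ if it vanishes), so $\a_1(z)$ is monotone and $\a_1(z) \to \pi/2$ as $z\to\infty$; likewise for $\a_2(z)$. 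That already gives both the monotonicity and the stated limit. I would also record that the sign of $d^2 + c_1 - c_2$ is exactly the sign of $|a_2 b| - |a_1 b|$ at $z=0$ (equivalently for all $z$), which reconciles with the companion observation in the text that the sorting of the $b a_i$ edge lengths is $z$-independent, and shows that the angle at the \emph{shorter} lateral edge starts obtuse and decreases to $\pi/2$ while the angle at the longer lateral edge starts acute and increases to $\pi/2$ (or both are right angles for all $z$ in the degenerate case).

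The only genuine subtlety — the part I would be most careful about — is the reduction that lets me treat $(p_i,q_i)$ as fixed: I am invoking that $A(z)$ translates rigidly and vertically with $z$, which is the content of Lemma~\lemref{zto0} together with the definition of $\P(z)$, so the horizontal coordinates of $a_1,a_2$ relative to $b$ are indeed $z$-independent and the side $|a_1 a_2|$ is constant. Everything after that is a one-variable calculus triviality, so there is no real obstacle; I would simply present the law-of-cosines formula, observe the numerator is constant and the denominator strictly increasing, and conclude.
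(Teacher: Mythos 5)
Your main argument is correct and is essentially the paper's own proof: both reduce to the observation that $\cos\a_1(z)$ equals a constant of fixed sign divided by an increasing function of $z$ (your law-of-cosines numerator $d^2+c_1-c_2$ is exactly $-2\,(a_1-b)\cdot(a_2-a_1)$, the dot product the paper computes and then differentiates), so the cosine tends monotonically to $0$ and the angle monotonically to $\pi/2$. One side remark in your write-up is wrong, though it is not needed for the lemma: the sign of $d^2+c_1-c_2$ is \emph{not} the sign of $|a_2 b|-|a_1 b|$; it is the sign of $d^2+|a_1b|^2-|a_2b|^2$, which by the converse of the Pythagorean comparison decides whether $\a_1$ is acute or obtuse. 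Consequently it is also not true that the angle at the shorter lateral edge always starts obtuse---both base angles can be acute (e.g.\ a nearly equilateral flat $A$-triangle with slightly unequal lateral edges); the correct one-way implication is that an obtuse base angle can occur only at the shorter lateral edge.
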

\begin{proof}
With loss of generality, let $b=(0,0,0)$, $a_1(z)=(1,0,z)$, and $a_2=(1+x,y,z)$,
with $y>0$.  If $x>0$, then $\a_1(z) > \pi/2$ (obtuse), and if $x \le 0$, then
$\a_1(z) < \pi/2$ (acute).  By symmetry, we need only prove the claim for
$\a_1(z)$.

The dot-product $(a_1(z) -b) \cdot (a_2(z) - a_1(z))$ determines 
either $\cos (\a_1(z))$ or $\cos (\pi-\a_1(z))$, depending on whether or
not $\a_1(z)$ is acute or not.
Direct computation leads to
$$
\cos (\;) = \frac{x} {  \sqrt{x^2+y^2} \sqrt{ 1 + z^2} }
$$
whose derivative with respect to $z$ is
$$
\frac{-xz} {  \sqrt{x^2+y^2} { (1 + z^2)^{3/2}} } \;.
$$
Because $z>0$, the sign of the derivative is entirely determined by the sign of $x$.
For $\a_1$ obtuse, $x>0$, the derivative is negative, which corresponds to 
decreasing $\a_1(z)$, and when $x<0$ and $\a_1$ is acute, the derivative is
positive corresponding to increasing $\a_1(z)$.
Thus the claim of the lemma is established.
\end{proof}

\begin{figure}[htbp]
\centering
\includegraphics[width=0.75\linewidth]{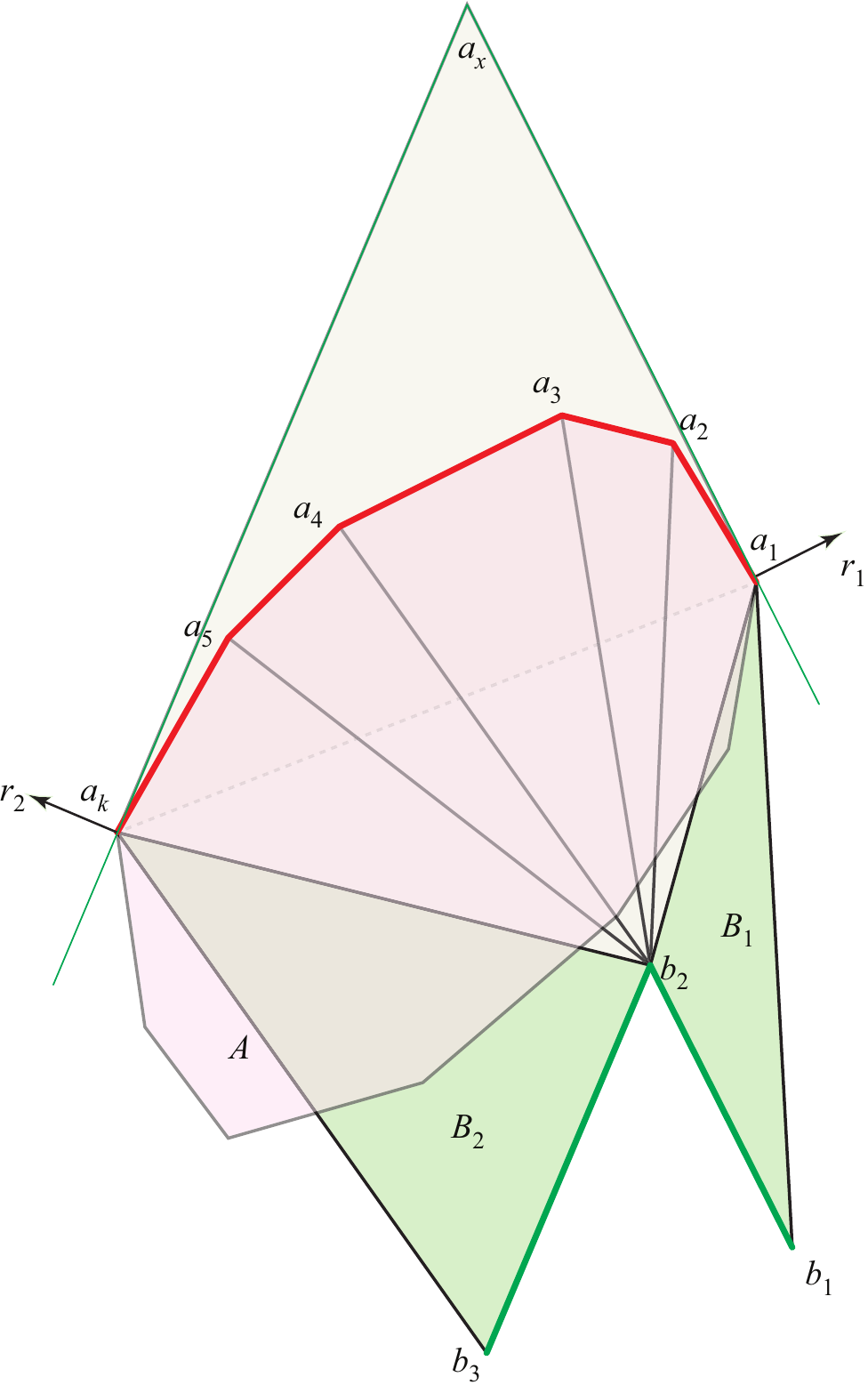}
\caption{Enclosing a convex chain with a triangle $\triangle a_1 a_x a_k$,
where $a_x$ is the intersection of lines of support at $a_1$ and $a_k$
parallel to $b_1 b_2$ and $b_2 b_3$ respectively.}
\figlab{AllConvexEncTri}
\end{figure}

\subsection{Proof of Lemma~\protect\lemref{CRz}}
\seclab{CRz}

Here we will need two important facts about the unfolded $a$-chain:
\begin{enumerate}
\item
Let $\a_j$ be the angle of the chain at $a_j$, i.e., the sum
of the two incident triangle angles, $\angle b_2 a_j a_{j-1} + \angle b_2 a_j a_{j+1} $.
If $\a_j$ is convex for $z=0$, it remains convex for all $z$; and similarly reflex
remains reflex, and a sum of $\pi$ remains independent of $z$.
\item $\a_j(z)$ is monotonic with respect to $z$, approaching $\pi$ as $z \to \infty$
from above (if initially reflex) or below (if initially convex).
\end{enumerate}
The essence of why Fact~1 holds is in
Fig.~\figref{SumPi}.
\begin{figure}[htbp]
\centering
\includegraphics[width=\linewidth]{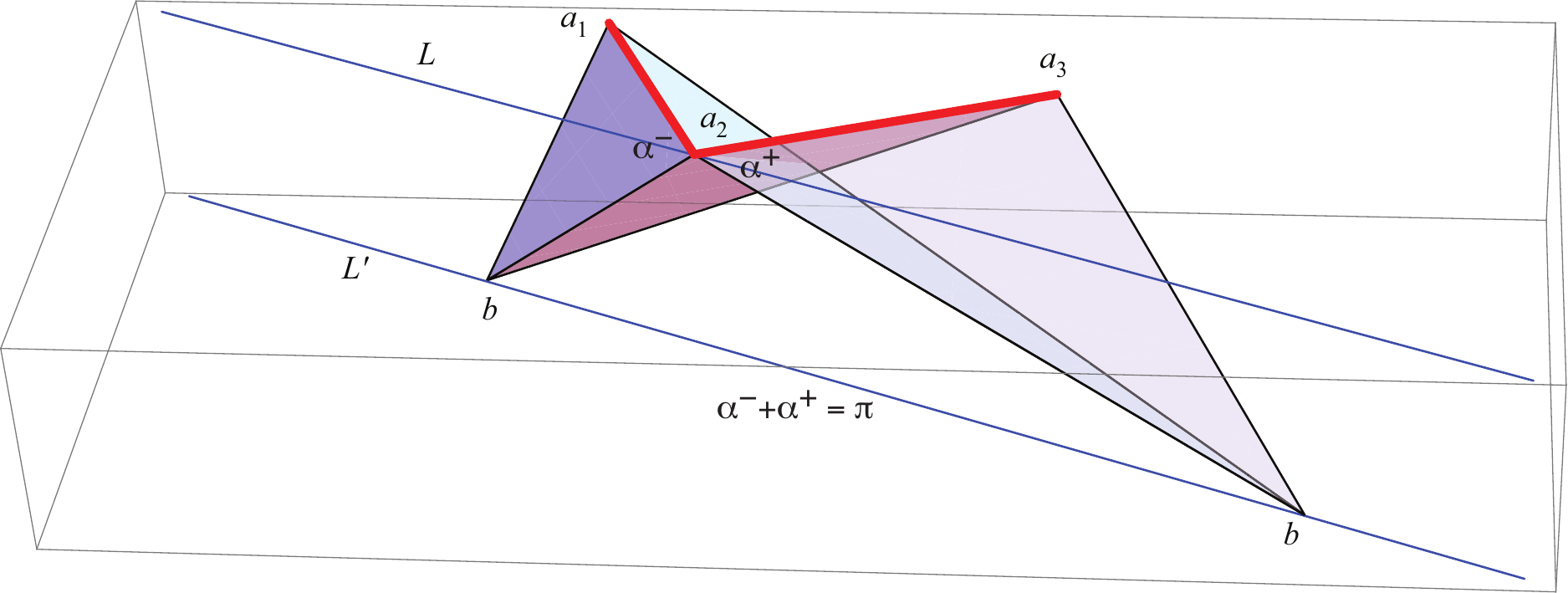}
\caption{The locus of positions $b$ for which $\a^- + \a^+ = \pi$}
\figlab{SumPi}
\end{figure}
See~\cite{o-taspi-12} for proofs.
Fact~2 can be established by superimposing neighborhoods of 
$a_j$ for two different $z$-values $z_1 < z_2$,
and noting, for reflex $\a_j$, the $z_2$-neighborhood
is nested in that for $z_1$, and consequently there is a larger
curvature $\kap_{a_j}(z_2) >\kap_{a_j}(z_1)$.

\seclab{CRz}
\begin{lemma}
If the $a$-chain consists of a convex and a reflex section, and the safe flip 
(by Lemma~\lemref{Flip}) is to
a side with a down-face ($B_2$ in the figure), then $AF'(z) \subset R(z)$:
the $A$-fan unfolds within the altitude region for all $z$.
\end{lemma}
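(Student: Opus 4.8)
The plan is to track the two enclosing triangles of Fig.~\figref{FlippingCR} --- the convex enclosure $\triangle a_t a_x a_k$ and the reflex enclosure $\triangle a_1 b_2 a_t$ --- and show each stays inside the altitude region $R(z)$ bounded by the rays $r_1$ (along the altitude of $B_1$, through $a'_1(z)$) and $r_k$ (along the altitude of $B_2$, through $a'_k(z)$) as $z$ grows. By Lemma~\lemref{AltitudeRaysUnf} the \emph{carrier lines} of $r_1$ and $r_k$ are fixed; only the apex points $a'_1(z)$, $a'_k(z)$ slide outward along them and the gap $\kap_{b_2}(z)$ at $b_2$ changes. Since the safe flip lands on the $B_2$ side, the convex section together with $B_2$ is rotated as a rigid unit about the $B_2$-edge into the base plane, so the convex enclosure's position relative to $r_k$ is governed by where $a'_k(z)$ sits and by the shape of that rigid unit; the reflex section is then reflected across the tangent $b_2 a_t$, which by Lemma~\lemref{Flip} keeps its flipped image above the rays at $z=0$.

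First I would fix the convex side. The convex section of the $a$-chain is enclosed by $\triangle a_t a_x a_k$ where $a_x$ is the meet of the supporting line through $a_k$ parallel to $b_2 b_3$ with the supporting line through $a_t$; by Fact~1 of Sec.~\secref{CRz} the chain stays convex for all $z$, and the supporting-line construction of Fig.~\figref{AllConvexEncTri} shows this triangle continues to enclose it, so it suffices to keep the triangle inside $R(z)$. One side of this triangle lies along the line $L_{b_2b_3}(z)$ which is exactly the carrier of $r_k$'s side of the base unfolding; the vertex $a_k$ unfolds to $a'_k(z)$ which by Lemma~\lemref{AltitudeRaysUnf} rides out \emph{along} $r_k$. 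So on the $r_k$ side the enclosure hugs the ray by construction and never crosses it. The only escape risk is across $r_1$ on the far side, and here I use that $|b_2 a_i(z)|$ is sorted the same way for all $z$ (the observation proved via Lemma~\lemref{CRz} itself / Fact~1) together with the angle monotonicity of Lemma~\lemref{AngleMonotonicity}: as $z$ increases, $\a_t(z)$ (the reflex chain angle, shown decreasing in Fig.~\figref{FlippingCR}(b)) moves toward $\pi$, which \emph{opens} the reflex corner and pulls the reflex enclosure $\triangle a_1 b_2 a_t$ away from $r_1$ rather than toward it; meanwhile the convex enclosure, being pinned to $r_k$, moves with $a'_k(z)$ and cannot reach across to $r_1$ because its extent transverse to $r_k$ is bounded by the $a$-chain diameter, which Fact~1 of Sec.~\secref{A-fan} (the chain spans at most ``half'' of $A$) controls.

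Then I would handle the reflex side: the reflex enclosure $\triangle a_1 b_2 a_t$ has apex at $b_2$, one edge along $b_2 a_1$ (which is a supporting line of $A$, hence along the $B_1$-side of the base unfolding, i.e., along $r_1$'s carrier up to the flip), and after the safe reflection across $b_2 a_t$ its image must stay above $r_1$ and $r_k$. At $z=0$ this is precisely Lemma~\lemref{Flip}. For $z>0$ I would argue that the reflection picture only improves: the tangency structure ($b_2$ tangent to $A$ at $a_s=a_1$ and $a_t$) persists by Lemma~\lemref{zto0}, the reflected reflex corner angle is $\a_t(z)$ which is monotone toward $\pi$, and the rays $r_1,r_k$ only turn \emph{down and outward} (as noted in the proof of Lemma~\lemref{Flip}, they are ``at their highest'' in the extremal configuration) so the $z=0$ safety margin is not consumed. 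Combining: both enclosing triangles stay within the region bounded by $r_1$ and $r_k$ for all $z\ge 0$, hence $AF'(z)\subset R(z)$.

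The main obstacle I expect is the interaction on the $r_1$ side between the two enclosures as $z$ varies: the convex enclosure is anchored to $r_k$ and the reflex enclosure is anchored to $b_2$/$r_1$, and I need them to remain \emph{jointly} inside $R(z)$ while $\kap_{b_2}(z)$ is not monotone and the apex angle $\angle a_1(z) b_2 a_2(z)$ is not monotone either (both cautioned about in Sec.~\secref{NestingPz}). The safe resolution is to phrase everything in terms of the two quantities that \emph{are} monotone --- the chain angles $\a_j(z)$ (Fact~2, Sec.~\secref{CRz}) and the base-face angles $\a_i(z)$ (Lemma~\lemref{AngleMonotonicity}) --- and in terms of the fixed carrier lines of the rays, so that the nonmonotone gap $\kap_{b_2}(z)$ only ever \emph{helps} (a larger gap gives more room), never needing to be bounded from the unhelpful side.
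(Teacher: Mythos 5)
Your setup matches the paper's: the same two enclosing triangles (the reflex enclosure $\triangle a_1 b_2 a_t$ and the convex enclosure built from $a_x$ on the line through $a_k$ parallel to $b_2b_3$), the same reliance on Facts~1 and~2 of Sec.~\secref{CRz}, and the same observation that $a_k(z)$ rides out along $r_2$ so that $a_x(z)$ travels on a line parallel to $r_2$ and the convex enclosure stays put. But the heart of the lemma is missing. The dangerous event is not an escape across $r_1$: since the safe flip sends the reflex section toward the $B_2$ side and $\kap_{b_2}(z)$ stays positive, the edge $b_2 a'_1(z)$ rotates ccw toward $r_2$, and what must be ruled out is the flipped vertex $a'_1(z)$ crossing $r_2$ at some $z>0$. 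The paper's argument here is quantitative: because the reflex chain stays reflex (Fact~1) and straightens toward the segment $a_t(z)a'_1(z)$, a crossing of $r_2$ forces the angle $\a_t(z)=\angle a'_1(z)\,a_t(z)\,a_x(z)$ to reach at least $3\pi/2$; since $\a_t(z)$ decreases monotonically toward $\pi$ (Fact~2), this would force $\a_t(0)\ge 3\pi/2$, and Fig.~\figref{FlipLemma2} shows that $\a_t>3\pi/2$ already makes this flip unsafe at $z=0$, contradicting the choice made via Lemma~\lemref{Flip}. Your substitute --- ``the reflection picture only improves'' and ``the $z=0$ safety margin is not consumed'' --- is precisely the assertion that needs proof, and it cannot be waved through, because the quantities that directly control the position of $a'_1(z)$ relative to $r_2$ (the gap $\kap_{b_2}(z)$ and the apex angles at $b_2$) are explicitly non-monotone. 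The $3\pi/2$ threshold plus the monotonicity of the chain angle is the device that converts the $z=0$ safety of the flip into safety for all $z$, and it is absent from your proposal.

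A secondary misstep: you write that the rays $r_1,r_k$ ``only turn down and outward'' as $z$ grows. By Lemma~\lemref{AltitudeRaysUnf} the carrier lines of the altitude rays are fixed independent of $z$; only the apexes $a'_j(z)$ slide along them. The ``at their highest'' remark in the proof of Lemma~\lemref{Flip} concerns the worst-case position of the rays relative to the tangency points for a single fixed prismatoid, not a variation with $z$, so no extra slack accrues from that source. Likewise, bounding the convex enclosure's reach toward $r_1$ by the ``half of $A$'' diameter estimate is not an argument --- the region between $r_1$ and $r_2$ can be narrow --- but this side is not where the difficulty lies, and the paper disposes of it with the parallel-riding observation you already have.
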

\begin{proof}
Let $a_s$ and $a_t$ be the vertices of the $a$-chain so that lines 
continaining $b_2 a_s$ and
$b_2 a_t$ are supporting tangents to $A$ at $a_s$ and $a_t$.
Thus $(a_1,\ldots,a_s)$ represents a convex portion of the $a$-chain,
$(a_s,\ldots,a_t)$ the reflex portion, and
$(a_t,\ldots,a_k)$ a convex portion.
We first assume $a_s=a_1$ so we have only a convex and a reflex section,
as illustrated in Fig.~\figref{FlippingCR}.
We also first assume that both $B_1$ and $B_2$ are down-faces and so do
not require flipping.
We analyze this case by mixing the convex and reflex approaches in earlier, easier cases not detailed here (but see Fig.~\figref{AllConvexEncTri}).

For the reflex chain, we connect $a_s=a_1$ to $a_t$ to form
a triangle $A_{st}=\triangle a_s b_2 a_t$ that encloses the reflex chain.
For the convex chain $(a_t,\ldots,a_k)$ we intersect the line $L_{23}$
parallel to $b_2 b_3$ through $a_k$ (just as in the all-convex case not detailed),
and intersect it with the line containing $b_2 a_t$.  Let that intersection point
be $a_x$.  Then the triangle $A_x=\triangle b_2 a_x a_k$ encloses the
convex chain.  Under the assumption that $B_1$ is a down-face, then 
$A_x$ encloses all down-faces, and does not need flipping.
$A_{st}$ does flip, and let us assume the safe flip is across $b_2 a_t$,
flipping $a_s$ to $a'_s$, with $A'_{st}$ the reflected triangle.

Vertex $a_k(z)$ rides out $r_2$.
By  construction, $a_x(z) a_k(z) \perp r_2$, as $a_x$ was defined by
$L_{23}$ parallel to $r_2$.  Because $|a_x(z) a_k(z)|=|a_x a_k|$,
$a_x(z)$ rides out along a line parallel $L_x$ to $r_2$, so $A_x(z) \subset R(z)$.

Now the curvature $\kap(z)$ at $b_2$, i.e., the angle gap in the unfolding, varies
in a possibly complex way, but it remains positive at all times, because clearly
$\P(z)$ is not flat at $b_2$ for any $z$. Thus $b_2 a'_1(z)$ is rotated ccw from
$b_2 a'_1(z)$.  
It remains to show that $b_2 a'_1(z)$ cannot cross $r_2$.

By Fact~1 above, the convex angle at $a_x$ remains convex at $a_x(z)$,
and therefore $a_t(z)$ cannot cross $L_x$ let alone $r_2$.
Again by Fact~1, the reflex chain $(a_1,\ldots,a_t)$ remains a reflex chain
with increasing $z$, and so is contained inside $A'_{st}(z)$.
This reflex chain straightens, approaching the segment $a_t(z) a'_1(z) $.

Because that chain is reflex, the only way that $A'_{st}$ can cross $r_2$ is
for the segment $a_t(z) a'_1(z) $ to cross, i.e., for $a'_1(z)$ to cross.
Notice this requires a highly reflex angle $\a_t(z)=\angle a'_1(z),a_t(z),a_x(z)$, at least $3\pi/2$ in fact,
in order to cross over the line $L_x$.
Now we have no control over the initial value of $\a_t$, but we know that
the flip was safe, so initially $a'_1$ is inside $r_2$.  If $\a_t$ is convex, then
$\a_t(z)$ remains convex and $a'_1(z)$ cannot cross $r_2$.
So assume $\a_t$ is initially reflex (as illustrated in Fig.~\figref{FlippingCR}).
Then by Fact~2, it decreases monotonically toward $\pi$ as $z$ increases.
Because it decreases, and needs to be at least $3\pi/2$ to cross $r_2$,
it must have started out at least $3\pi/2$.
Now we argue that this is impossible, as the other flip would have been chosen.

As Fig.~\figref{FlipLemma2} shows, if $\a_t > 3\pi/2$, then the reflection
$a_t a'_1$ is already more than $\pi/2$ ccw of $b_2 a_t$, which
marks it as an unsafe flip. We would instead have flipped the reflex portion
across $b_2 a_1=a_s$.  And indeed the flip in Fig.~\figref{FlippingCR} 
would not have been chosen because it is potentially unsafe (but does not
in this case actually place $a'_1$ on the wrong side of $r_2$).
\begin{figure}[htbp]
\centering
\includegraphics[width=0.75\linewidth]{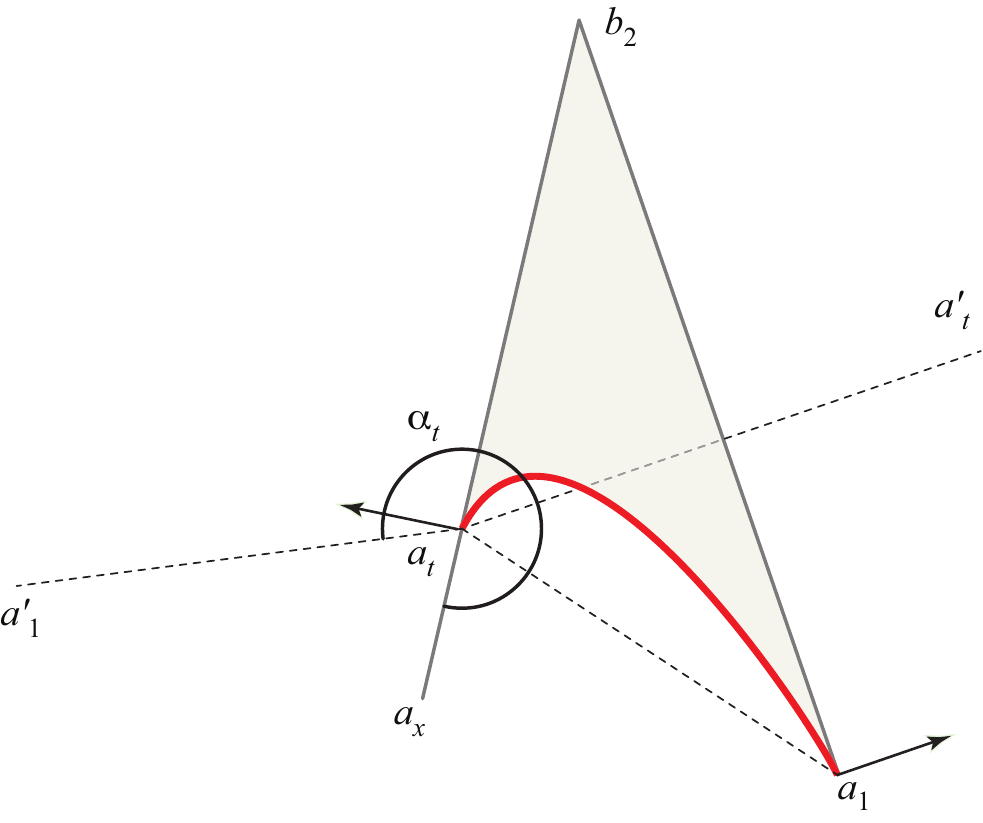}
\caption{In order for $\a_t > 3\pi/2$, $a_t a_1$ must make an angle more than $\pi/2$
with $b_2 a_t$.}
\figlab{FlipLemma2}
\end{figure}
\end{proof}

\subsection{Vertex-Neighborhood Counterexample Coordinates}
\seclab{Coordinates}
The coordinates of the nine vertices comprising $\P$ in Fig.~\figref{CatsEye3D1} are shown in the table
below,
with $\{a_2,b_3,c_2,p_3\}$ each reflections of $\{a_1,b_1,c_1,p_1\}$
with respect to the $x=0$ plane:
\begin{center}
\begin{tabular}{|c|c|}
\hline
Point & Coordinates\\
\hline
\hline
$b_2$ & $(0,0,0.2)$\\
$a_1,a_2$ & $(\pm 0.603496,0.0399127,0.2)$\\
$b_1,b_3$ & $(\pm 2,-0.1,0)$\\
$c_1,c_2$ & $(\pm 0.0124876,0.501659,0.2)$\\
$p_1,p_3$ & $(\pm 6.03626,-0.4,-0.6)$\\
\hline
\end{tabular}
\end{center}

\subsection{Proof of Corollary~\protect\corref{NOPrismatoid}}
\seclab{NOPrismatoid}
\begin{cor}
Let $\P$ be a triangular prismatoid all of whose faces, except possibly the base
$B$, are nonobtuse triangles, and the base is a (possibly obtuse)
triangle.  
Then every petal unfolding of $\P$ does not
overlap.
\end{cor}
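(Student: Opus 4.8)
The plan is to build on the machinery already established for topless prismatoids (Theorem~\thmref{ToplessPrismatoids}) and on the diamond refinement used for Theorem~\thmref{NonObtuse}, and to add just one new ingredient: a safe home for the top triangle $A$. Since $\P$ is a triangular prismatoid, $A=(a_1,a_2,a_3)$ and $B=(b_1,b_2,b_3)$, so there are only three $A$-fans, and every petal unfolding sends the three $B$-triangles to the base plane, distributes the $A$-triangles of each fan $AF_i$ into a cw and a ccw group, and finally attaches $A$ by one edge to one $A$-triangle. First I would invoke the nonobtuseness hypothesis exactly as in Theorem~\thmref{NonObtuse}: since every face other than $B$ is a nonobtuse triangle, each $A$-fan $AF_i$, however its triangles are grouped cw/ccw, stays inside the small \emph{diamond} $D_i \subset R_i$ of Fig.~\figref{Diamond}. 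So the $B$-triangles plus all $A$-triangles are pairwise nonoverlapping for \emph{any} petal unfolding; what remains is to place $A$ and show it misses all of this.

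Next I would introduce the enlarged region $V_i \supset R_i$ indicated in Fig.~\figref{VTop}: the wedge/region attached along the outer edge of the $A$-fan $AF_i$ that is guaranteed, by a perpendicular argument like that of Fig.~\figref{Diamond}(b), not to meet the neighboring diamonds $D_{i-1}$ and $D_{i+1}$. The key geometric claim is then: when the top $A$ is hinged onto the outermost $A$-triangle of fan $AF_i$ (the one sharing edge $a_s a_t$ with $A$, say), the unfolded image $A'$ lies inside $V_i$. Because $A$ is itself a nonobtuse triangle and the $A$-triangle it is glued to is nonobtuse, the unfolded $A'$ is ``short and fat'' relative to the hinge edge; I would verify that the perpendiculars dropped from $a_s$ and $a_t$ to the hinge edge, extended, bound the region into which $A'$ can fall, and that this region is contained in $V_i$ and hence disjoint from $D_{i-1}\cup D_{i+1}$. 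One also has to check $A'$ does not re-enter $D_i$ itself or the base $B$; this follows because $A'$ lies on the far side of the $A$-fan from $B$, and the fan's own triangles (being nonobtuse) screen $A'$ away from $D_i$.

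Finally I would assemble the pieces: for an arbitrary petal unfolding, the three diamonds $D_1,D_2,D_3$ are pairwise disjoint (each $D_i\subset R_i$ and the $R_i$ partition the exterior of the base unfolding by Lemma~\lemref{AltitudeRays}), the base $B$ and the $B$-triangles occupy only the base unfolding $BU$, and $A'$ sits in some $V_i$ disjoint from the other two diamonds and from $BU$. Since $AF_j' \subset D_j$ for every $j$, no two of the unfolded pieces overlap, so the petal unfolding is nonoverlapping. I expect the main obstacle to be the $A$-placement step — precisely, proving $A' \subset V_i$ uniformly: one must rule out the top triangle, after being flipped across the hinge edge together with possibly several flipped up-faces of its fan, sticking out past the perpendiculars that define $V_i$. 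Nonobtuseness of \emph{all} the lateral faces (not merely of $A$) is what makes this work, because each flip across a tangent from $b_i$ keeps the cumulative reach bounded by the fan's own altitude region; this is the same phenomenon as in Lemma~\lemref{Flip} and Lemma~\lemref{CRz}, now applied with the extra comfort that no obtuse angle can create overhang.
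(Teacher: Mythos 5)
Your proposal matches the paper's own argument in all essentials: the $A$-fans nest in the diamonds $D_i$ by the nonobtuseness hypothesis (Theorem~\thmref{NonObtuse}), an enlarged region $V_i \supset R_i$ is defined that avoids the adjacent diamonds, the nonobtuseness of the top $A$ forces it to project into its hinge edge and hence lie in $V_i$, and the triangularity of $B$ ensures that only the two adjacent diamonds need to be avoided. This is the same decomposition and the same key placement argument as the paper's proof, so no further comparison is needed.
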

\begin{proof}
We first let $B$ be an arbitrary convex polygon.
We define yet another region $V_i \supset R_i$ incident to $b_i$, bound by rays
from $b_i$ through $a_{i-1}$ and through $a_i$.
See Fig.~\figref{VTop}.
Note that these rays shoot at or above the adjacent diamonds $D_{i-1}$
and $D_{i+1}$,
and therefore miss $A_{i-2}$ and $A_{i+1}$.

Now we invoke the assumption that $B$ is a triangle: In that case,
those adjacent diamonds contain all the remaining $A$-triangles, because there
are
only three $b_i$ vertices: $b_1$ at which $V_1$ is incident, and
diamonds $D_2$ and $D_3$ to either side.
(Note there can only be altogether three $A$-triangles, one for each
edge of $A$.)
Now unfold the top $A$ of $\P$ attached to some $A$-triangle, without
loss
of generality a $A$-triangle incident to $b_1$.  Then because $A$ is
nonobtuse,
its altitude, and indeed all of $A$, projects into that edge shared
with a $A$-triangle $A_1$.  
Because the top of the $A$-triangle is inside $D_1$, we can see that
$A \subset V_i$, and we
have
protected $A$ from overlapping any other $A$-triangle or any $A_i$.
\end{proof}

It seems quite likely that this corollary still holds with $B$ an arbitrary
convex polygon, but, were the same proof idea followed,
it would require showing that $V_i$ does not
intersect
nonadjacent diamonds or more distant $A_j$ triangles.

\end{document}